    \newcommand{\href}[2]{#2}
\theoremstyle{definition}
\newtheorem{theorem}{Theorem}[section]
\newcommand{\Z}{\mathbb{Z}}
\newcommand{\pfunc}[3]{#1 : #2 \dashrightarrow #3 }
\newcommand{\color}{{\rm col}}
\newcommand{\strength}{{\rm str}}
\newcommand{\dom}{{\rm dom} \;}
\newcommand{\res}[1]{\textrm{res}\left(#1\right)}
\newcommand{\termasm}[1]{\mathcal{A}_{\Box}[\mathcal{#1}]}
\newcommand{\prodasm}[1]{\mathcal{A}[\mathcal{#1}]}
\begin{document}

\title{Efficient Squares and Turing Universality at Temperature 1 with a Unique Negative Glue}
\author{Matthew J. Patitz\footnote{ Department of Computer
Science, University of Texas--Pan American, Edinburg, TX, 78539, USA.
mpatitz@cs.panam.edu. This author's research was supported in part by National Science Foundation Grant CCF-1117672. }
\and Robert T. Schweller\footnote{ Department of Computer Science, University of Texas--Pan American, Edinburg, TX, 78539, USA. schwellerr@cs.panam.edu. This author's research was supported in part by National Science Foundation Grant CCF-1117672. }
\and Scott M. Summers\footnote{Department
of Computer Science and Software Engineering, University of Wisconsin--Platteville, Platteville, WI 53818, USA.
summerss@uwplatt.edu.}}
\date{}
\maketitle

\begin{abstract}
Is Winfree's abstract Tile Assembly Model (aTAM) ``powerful?'' Well, if certain tiles are required to ``cooperate'' in order to be able to bind to a growing tile assembly (a.k.a., temperature 2 self-assembly), then Turing universal computation and the efficient self-assembly of $N \times N$ squares is achievable in the aTAM (Rotemund and Winfree, STOC 2000). So yes, in a computational sense, the aTAM is quite powerful! However, if one completely removes this cooperativity condition (a.k.a., temperature 1 self-assembly), then the computational ``power'' of the aTAM (i.e., its ability to support Turing universal computation and the efficient self-assembly of $N \times N$ squares) becomes unknown. On the plus side, the aTAM, at temperature 1, is not only Turing universal but also supports the efficient self-assembly $N \times N$ squares if self-assembly is allowed to utilize three spatial dimensions (Fu, Schweller and Cook, SODA 2011). In this paper, we investigate the theoretical ``power'' of a seemingly simple, restrictive class of tile assembly systems (TASs) in which (1) the absolute value of every glue strength is 1, (2) there is a single negative strength glue type and (3) unequal glues cannot interact (i.e., glue functions must be ``diagonal''). We call this class of TASs the \emph{restricted glue} TASs (rgTAS). We achieve several positive results. First, we first show that the tile complexity of uniquely producing an $N \times N$ square with an rgTAS is $O\left(\frac{\log n}{\log \log n}\right)$, matching the upper and lower bounds for the aTAM in general. In another result, we prove that the rgTAS class is Turing universal in the aTAM by exhibiting a construction which simulates an arbitrary Turing machine.  Additionally, we provide results for a variation of the rgTAS class, partially restricted glue TASs, which is similar except that the magnitude of the negative glue's strength cannot be precisely controlled and can only assumed to be $\ge 1$.  These results consist of a construction with $O(\log n)$ tile complexity for building $N \times N$ squares, and a construction which simulates a Turing machine but with a greater scaling factor than for the rgTAS construction.

\end{abstract}

\section{Introduction}
\label{sec:introduction}

Even in an overly-simplified model such as Winfree's abstract Tile Assembly Model (aTAM) \cite{Winf98}, the theoretical power of algorithmic self-assembly is formidable.  Universal computation is achievable \cite{Winf98} and computable shapes self-assemble as efficiently as the limits of algorithmic information theory will allow  \cite{SolWin07, RotWin00, AdChGoHu01}.  However, these theoretical results all depend on an important system parameter, the temperature $\tau$, which specifies the minimum amount of binding force that a tile must experience in order to permanently bind to an assembly. The temperature $\tau$ is typically set to a value of $2$ because at this temperature (and above), the mechanism of ``cooperation'' is available, in which the correct positioning of multiple tiles is necessary before certain additional tiles can attach.  However, in temperature $1$ systems, where such cooperation is unenforceable, despite the fact that they have been extensively explored \cite{jLSAT1,CookFuSch11}, it remains an unproven conjecture that self-assembly at temperature $\tau < 2$ is incapable of universal computation.  It is also widely conjectured (most notably in \cite{RotWin00}), although similarly unproven, that the efficient self-assembly of such shapes even as simple as $N \times N$ squares is impossible.

Given the seeming theoretical weakness of tile assembly at temperature $1$, contrasted with its computational expressiveness at temperature $2$, it seems natural that experimentalists would focus their efforts on the latter.  However, as is often the case, what seems promising in theory is not necessarily as promising in practice.  It turns out that in laboratory implementations of tile assembly systems \cite{RoPaWi04,BarSchRotWin09,CheSchGoeWin07}, it has proven difficult to build true strength-$2$ glues in addition to being able to strictly enforce the temperature threshold (e.g. many errors that are due to ``insufficient attachment'' tend to occur in practice).  Therefore, the characterization of self-assembly at temperature 1 is quite worth pursuing.

With the goal in mind of specifying a model of self-assembly that is closer to the intersection of theoretical power and experimental plausibility, in this paper, we propose ``the aTAM at temperature $\tau=1 + \epsilon$''.  We introduce the class of \emph{restricted glue} tile assembly systems (rgTAS), which requires that (1) all glues have strength $-1$, $0$, or $1$, (2) that there is \textbf{only one} glue type that exhibits $-1$ strength (i.e., a repulsive force equivalent in magnitude to the binding force of a strength $1$ glue), and (3) the glue function is \emph{diagonal}, which means that a glue of one type interacts only with other glues of the same type.  Our goal in investigating rgTASs is to study the ``simplest'' systems of algorithmic self-assembly that retain the computational and geometrical expressiveness of temperature $2$ self-assembly.  In this paper, we achieve two positive results. First, we first show that the tile complexity of uniquely producing an $N \times N$ square in an rgTAS is $O\left(\frac{\log n}{\log \log n}\right)$.  This is especially notable since it matches the upper and lower bounds for the unrestricted temperature $2$ aTAM \cite{AdChGoHu01}. Furthermore, in a later result we prove that the class rgTAS is Turing universal.

The use of glues possessing negative strength values has been investigated within a variety of contexts \cite{jReifSahuPeng06,jDotKarMas10}.  However, previous results have been much less restrictive, allowing non-diagonal glue functions (meaning that glue types can have interactions, perhaps of different strengths, with multiple different glue types) and large magnitudes.  Additionally, no explicit bound has been set on the number of unique negative strength glue types.  In order to help bridge the gap between theory and experiment, we have proposed restrictions on tile assembly systems (in the form of the rgTAS class stated above).

Various experimental implementations of the Tile Assembly Model have utilized tiles created from DNA \cite{RoPaWi04,BarSchRotWin09,CheSchGoeWin07,WinLiuWenSee98,MaoSunSee99}. Moreover, several results have shown that magnetic particles can be attached to DNA molecules \cite{DNAandMagnets,DNAandMagnets2}.  Since two magnetized particles of the same polarity experience a repulsive force, the combination of DNA tiles with attached magnetic particles is a natural prospect for the implementation of negative strength glues.  It may be possible to adjust the size, composition, and position of the magnetic particles to cause the repulsive force experienced between two tiles to be roughly equal in magnitude to the attractive force experienced by two strength $1$ glues. (Note that utilizing magnetic polarity for glues has previously been modeled in \cite{MajumReif08}.) Also, the requirement of only a single negative glue type allows for the attachment of the same magnetized particle to any tile surface that needs to exhibit a $-1$ strength glue.

However, it remains possible that the means of providing the repulsive forces for negative glues cannot, in fact, be so finely tuned that they produce repulsive forces (nearly) exactly and consistently equal in magnitude to the positive strength-$1$ glues of tiles.  Therefore, we also present constructions for an adapted version of the rgTAS class, the \emph{partially restricted glue TAS} (or prgTAS) class, in which the negative strength glues can only be guaranteed to be of magnitude $\ge 1$, that is, they may repel with more force than $-1$.  In this case, the previous constructions no longer work correctly and we thus provide necessary adaptations.  It is notable that in this somewhat more relaxed class, our new constructions require additional tile complexity (for constructing $N \times N$ squares and simulating Turing machines) or larger scale factors (for simulating Turing machines) than the rgTAS constructions.

The organization of this paper is as follows.  In Section~\ref{sec:prelim}, we review the aTAM and define the rgTAS class, along with a few other definitions used in our subsequent constructions.  In Section~\ref{sec:squares}, we present the rgTAS and prgTAS constructions for self-assembling $N \times N$ squares.  In Section~\ref{sec:zig-zag}, we show how to simulate zig-zag systems (e.g., a TAS that simulates an arbitrary Turing machine on an arbitrary input) with an rgTAS and with a prgTAS.

\section{Preliminaries}\label{sec:prelim}

In this paper, we work in the $2$-dimensional discrete Euclidean space $\Z^2$.

Let $U_2 = \{(0,1), (1,0), (0,-1), (-1,0)\}$ be the set of all
\emph{unit vectors}, i.e., vectors of length 1 in $\mathbb{Z}^2$. We
write $[X]^2$ for the set of all $2$-element subsets of a set $X$.
All \emph{graphs} here are undirected graphs, i.e., ordered pairs $G
= (V, E)$, where $V$ is the set of \emph{vertices} and $E \subseteq
[V]^2$ is the set of \emph{edges}. All logarithms are base-$2$.

\subsection{The Abstract Tile Assembly Model}
\label{sec-tile-assembly-model}

We now give a brief and intuitive sketch of the Tile Assembly Model
that is adequate for reading this paper.  More formal details and
discussion may be found in \cite{Winf98,RotWin00,Roth01,jSSADST}.

Intuitively, a tile type $t$ is a unit square that can be
translated, but not rotated, having a well-defined ``side
$\vec{u}$'' for each $\vec{u} \in U_2$. Each side $\vec{u}$ of $t$
has a ``glue'' of ``color'' $\textmd{col}_t(\vec{u})$ -- a string
over some fixed alphabet $\Sigma$ -- and ``strength''
$\textmd{str}_t(\vec{u})$--an integer--specified by its type
$t$. Two tiles $t$ and $t'$ that are placed at the points $\vec{a}$
and $\vec{a}+\vec{u}$, respectively, interact with \emph{strength}
$\textmd{str}_t\left(\vec{u}\right)$ if and only if
$\left(\textmd{col}_t\left(\vec{u}\right),\textmd{str}_t\left(\vec{u}\right)\right)
=
\left(\textmd{col}_{t'}\left(-\vec{u}\right),\textmd{str}_{t'}\left(-\vec{u}\right)\right)$.
If $\textmd{str}_t\left(\vec{u}\right) > 0$, those tiles \emph{bind} with that strength.

Given a set $T$ of tile types, an \emph{assembly} is a partial
function $\pfunc{\alpha}{\Z^2}{T}$, with points $\vec{x}\in\Z^2$ at
which $\alpha(\vec{x})$ is undefined interpreted to be empty space,
so that $\dom \alpha$ is the set of points with tiles. $\alpha$ is
\emph{finite} if $|\dom \alpha|$ is finite. For assemblies $\alpha$
and $\alpha'$, we say that $\alpha$ is a \emph{subconfiguration} of
$\alpha'$, and write $\alpha \sqsubseteq \alpha'$, if $\dom \alpha
\subseteq \dom \alpha'$ and $\alpha(\vec{x}) = \alpha'(\vec{x})$ for
all $x \in \dom \alpha$.

A \emph{grid graph} is a graph $G =
(V,E)$ in which $V \subseteq \Z^2$ and every edge
$
\{
\vec{a},\vec{b}
\} \in E$ has the property that $\vec{a} - \vec{b} \in U_2$. The
\emph{binding graph of} an assembly $\alpha$ is the grid graph
$G_\alpha = (V, E )$, where $V =
\dom{\alpha}$, and $\{\vec{m}, \vec{n}\} \in E$ if and only if (1)
$\vec{m} - \vec{n} \in U_2$, (2)
$\color_{\alpha(\vec{m})}\left(\vec{n} - \vec{m}\right) =
\color_{\alpha(\vec{n})}\left(\vec{m} - \vec{n}\right)$, (3) $\strength_{\alpha(\vec{m})}\left(\vec{n} - \vec{m}\right) =
\strength_{\alpha(\vec{n})}\left(\vec{m} - \vec{n}\right)$, and (4)
$\strength_{\alpha(\vec{m})}\left(\vec{n} -\vec{m}\right) > 0$. An
assembly is $\tau$-\emph{stable}, where $\tau \in \mathbb{N}$, if it
cannot be broken up into smaller assemblies without breaking bonds
of total strength at least $\tau$ (i.e., if every cut of $G_\alpha$
cuts edges the sum of whose strengths is at least $\tau$).  For the
case of negative strength glues, we employ the model of irreversible
assembly as defined in \cite{jDotKarMas10}.

Self-assembly begins with a \emph{seed assembly} $\sigma$ (typically
assumed to be finite and $\tau$-stable) and
proceeds asynchronously and nondeterministically, with tiles
adsorbing one at a time to the existing assembly in any manner that
preserves stability at all times.

A \emph{tile assembly system} (\emph{TAS}) is an ordered triple
$\mathcal{T} = (T, \sigma, \tau)$, where $T$ is a finite set of tile
types, $\sigma$ is a seed assembly with finite domain, and $\tau$ is
the temperature. In subsequent sections of this paper, unless explicitly stated otherwise, we assume that $\tau = 1$ and $\sigma$ consists of a single seed tile type placed at the origin. An \emph{assembly sequence} in a TAS $\mathcal{T} = (T, \sigma, 1)$ is
a (possibly infinite) sequence $\vec{\alpha} = ( \alpha_i \mid 0
\leq i < k )$ of assemblies in which $\alpha_0 = \sigma$ and each
$\alpha_{i+1}$ is obtained from $\alpha_i$ by the ``$\tau$-stable''
addition of a single tile. The \emph{result} of an assembly sequence
$\vec{\alpha}$ is the unique assembly $\res{\vec{\alpha}}$
satisfying $\dom{\res{\vec{\alpha}}} = \bigcup_{0 \leq i <
k}{\dom{\alpha_i}}$ and, for each $0 \leq i < k$, $\alpha_i
\sqsubseteq \res{\vec{\alpha}}$.
If $\vec{\alpha} = \left(\alpha_i \mid 0 \leq i < k \right)$ is an assembly sequence in $\mathcal{T}$ and $\vec{m} \in \mathbb{Z}^2$, then the $\vec{\alpha}$-{\it index} of $\vec{m}$ is $i_{\vec{\alpha}}(\vec{m}) = \min\{ i\in \mathbb{N} \; \left| \; \vec{m} \in \dom{\alpha_i} \right. \}$. That is, the $\vec{\alpha}$-index of $\vec{m}$ is the time at which any tile is first placed at location $\vec{m}$ by $\vec{\alpha}$. For each location $\vec{m} \in \bigcup_{0\leq i < l}{\dom{\alpha_{i}}}$, define the set of its input sides $
\textmd{IN}^{\vec{\alpha}}(\vec{m}) = \left\{ \vec{u} \in U_2 \; \left| \;
\textmd{str}_{\alpha_{i_{\vec{\alpha}}(\vec{m})}}(\vec{u})>0
\right.\right\}$.

We write $\prodasm{T}$ for the
\emph{set of all producible assemblies of} $\mathcal{T}$. An
assembly $\alpha$ is \emph{terminal}, and we write $\alpha \in
\termasm{\mathcal{T}}$, if no tile can be stably added to it. We
write $\termasm{T}$ for the \emph{set of all terminal assemblies of
} $\mathcal{T}$. A TAS ${\mathcal T}$ is \emph{directed}, or
\emph{produces a unique assembly}, if it has exactly one terminal
assembly i.e., $|\termasm{T}| = 1$. A set $X$ \emph{strictly self-assembles} if there is a TAS $\mathcal{T}$ for which every assembly $\alpha\in\termasm{T}$ satisfies $\dom \alpha = X$.


\subsection{Restricted glue, zig-zag tile assembly systems and path simulation}
\textbf{Restricted Glue Tile Assembly Systems.} 

We say that a tile set $T$ is \emph{glue restricted} if (1) the absolute value of every glue strength in $T$ is 1, (2) the glue function is \emph{diagonal}, meaning that for every glue type $g$, the interaction between $g$ and any other glue type is of strength $0$, and of magnitude $1$ between two copies of $g$, and (3) there is a single negative-strength glue type. Intuitively, glue restricted tile sets are as ``close'' as one can get to \emph{pure} temperature one self-assembly in the aTAM. A TAS $\mathcal{T} = (T, \sigma, 1)$ is \emph{glue restricted} if $T$ is glue restricted. In this paper, for notational convenience, we will simply refer to a glue restricted TAS as an rgTAS.

Due to the potential difficulty of experimentally implementing negative strength glues such that the magnitude of their strengths is very nearly exactly equivalent to that of the positive strength-$1$ glues, we also introduce another type of tile assembly system, which we refer to as \emph{partially glue restricted} tile assembly system, or prgTAS.  A tile set in a prgTAS has the same properties as those for rgTASs, except for the fact that the magnitude of the strength of the single negative glue is guaranteed to be \emph{at least} 1, but may in fact be greater (i.e. the effective glue strength could actually be $-2, -3, $ etc.).

\textbf{Zig-zag tile assembly systems.} A tile assembly system $\mathcal{T} = (T,\sigma,2)$ is called a \emph{zig-zag} \cite{CookFuSch11} tile assembly system if (1) $\mathcal{T}$ is directed, (2) there is a single assembly sequence $\vec{\alpha}$ in $\mathcal{T}$, with $\termasm{T} = \{ \alpha \}$, and (3) for every $\vec{x} \in \dom{\alpha}$, $(0,1) \not \in \textmd{IN}^{\vec{\alpha}}(\vec{x})$.  Intuitively, zig-zag systems are those that grow one horizontal row at a time, alternating left-to-right and right-to-left growth, always adding new rows to the north and never growing south.  Zig-zag systems are capable of simulating universal Turing machines, and thus universal computation \cite{CookFuSch11}. If $\mathcal{T}$ is a zig-zag system with $\termasm{T} = \{ \alpha \}$ and for every $\vec{x} \in \dom{\alpha}$ and every $\vec{u} \in U_2$, $\textmd{str}_{\alpha(\vec{x})}(\vec{u}) + \textmd{str}_{\alpha(\vec{x})}(-\vec{u}) < 4$, then we say that $\mathcal{T}$ is \emph{compact}.  Intuitively, compact zig-zag systems are zig-zag systems that only extend the width of each row by one over the length of the previous row, and only grow upward by one tile before continuing horizontal growth. Compact zig-zag systems are capable of simulating universal Turing machines \cite{CookFuSch11}.


\textbf{Path simulation.} Let $\mathcal{T} = (T,\sigma,2)$ be a zig-zag TAS with assembly sequence $\vec{\alpha} = (\alpha_i \mid 0 \leq i < k)$.
We say that a restricted TAS $\mathcal{S} = (S,\sigma,1)$ \emph{path simulates} $\mathcal{T}$ (at \emph{scale factor} $c$) if (1) $\mathcal{S}$ has a single assembly sequence, (2) there exist computable indices $0 = i_{-1} < i_0 < \cdots < i_{k-1} < i_k = k$ satisfying $c = \max\left\{ \left. i_j - i_{j - 1} \; \right| \; 0 \leq j < k \right\}$ and (3) there exists a computable function $f: \mathcal{A}[\mathcal{S}] \longrightarrow \mathcal{A}[\mathcal{T}]$ such that, for all $0 \leq j < k$, $f\left( \res{\left( \alpha_{i_{j-1}}, \ldots, \alpha_{i_j - 1}\right)}\right) = \alpha_j$.
Intuitively, $\mathcal{S}$ path simulates $\mathcal{T}$ at scale factor $c$ if $\mathcal{S}$ uniquely produces a path of tiles that is logically divided into segments of length $c$, where each such segment corresponds to exactly one tile in $T$, and these segments self-assemble exactly in accordance with the unique assembly sequence of $\mathcal{T}$.
Note that the idea of one tile assembly system simulating another has been studied in other contexts as well \cite{USA, CookFuSch11}.

\section{Efficient Self-Assembly of Squares}\label{sec:squares}
The self-assembly of $N \times N$ squares has been studied extensively (see \cite{AdChGoHu01,Dot10,RotWin00,KS07,SFTSAFT,KaoSchS08}). Rothemund and Winfree conjectured in \cite{RotWin00} that, for every $N \in \mathbb{N}$, if $\mathcal{T}_N = (T_N, \sigma, 1)$ uniquely produces $S_N = \{0, 1, \ldots, N - 1\} \times \{0, 1, \ldots, N - 1\}$, then $|T_N| \geq 2N - 1$. In what follows, we show that this bound does not hold for glue restricted tile assembly systems or for partially restricted tile assembly systems.

\subsection{Optimal tile complexity with an rgTAS}\label{sec:opt-squares}
We first demonstrate an rgTAS construction for the self-assembly of any $N \times N$ square that achieves $O\left(\frac{\log N}{\log \log N}\right)$ tile complexity.  For almost all $N$, this meets the information theoretic lower bound of $\Omega\left(\frac{\log N}{\log \log N}\right)$ that holds for any unrestricted aTAM system.

\begin{theorem}
\label{thm:efficient_square} For all $N \in \mathbb{N}$, there exists an rgTAS $\mathcal{T} = (T_N, \sigma, 1)$, such that $S_N$ strictly self-assembles in $\mathcal{T}$, $\mathcal{T}$ is directed, and $|T_N| = O\left(\frac{\log N}{\log \log N}\right)$.
\end{theorem}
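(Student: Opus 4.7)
The plan is to carry the standard optimal temperature-$2$ construction for $N \times N$ squares over to the rgTAS setting by simulating each step of a base-$b$ counter with a constant-size temperature-$1$ gadget whose behavior is enforced by the single negative-strength glue. Recall that at temperature $2$ one achieves $O(\log N / \log \log N)$ tile complexity by choosing $b$ with $b \log b = \Theta(\log N)$, using $O(\log_b N)$ unique tile types to grow an initial row encoding $N$ in base $b$ and $O(b)$ counter tile types to increment the value row by row, then capping off the assembly with a constant number of boundary tile types to square it off.

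I would first realize this counter as a zig-zag tile assembly system $\mathcal{T}' = (T', \sigma, 2)$, which preserves the $O(\log N / \log \log N)$ tile complexity and whose unique assembly sequence grows in exactly the alternating row-by-row pattern that a counter needs. I would then construct an rgTAS $\mathcal{T} = (T, \sigma', 1)$ by replacing each tile $t \in T'$ with a $c \times c$ macro-tile built from $O(1)$ fresh tile types in $T$, where $c$ is a fixed constant determined by the gadget design. Each macro-tile would self-assemble along a single internal snake whose geometry is determined by the western and southern input glues; the unique negative glue would be positioned at specific interior or boundary locations so that any alternative internal path fails to be $1$-stable, forcing the correct output glues on the northern and eastern boundaries. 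Since each macro-tile fills its $c \times c$ block, the overall assembly fills a $(cN') \times (cN')$ region, and by choosing $N'$ so that $cN' = N$ (with a constant number of boundary gadgets absorbing any rounding), the resulting domain is exactly $S_N$.

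Tile-complexity accounting is immediate: $|T| \leq O(1) \cdot |T'| = O(\log N / \log \log N)$. Directedness of $\mathcal{T}$ follows from directedness of $\mathcal{T}'$ together with the fact that each macro-tile's internal path is uniquely determined by its input glues.

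The main obstacle will be the design of the macro-tile itself. At temperature $1$ multiple candidate tile attachments are always available along any exposed edge, so the single negative glue must be positioned with enough care that, for every combination of input glues, exactly one continuation through the block remains $1$-stable. A detailed case analysis will be needed to verify that no spurious growth leaks out of a macro-tile, that each cooperative signal of $\mathcal{T}'$ is correctly encoded and transmitted across macro-tile boundaries, and that no two macro-tiles interfere when placed adjacent to one another. These difficulties are closely related to the zig-zag simulation in Section~\ref{sec:zig-zag}, and I would expect the square construction and the zig-zag simulation to share most of their gadgetry.
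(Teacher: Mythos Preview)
Your proposal is essentially the paper's approach: simulate a temperature-$2$ base-$b$ counter by replacing each tile with a constant-size gadget (the paper uses roughly $4\times 4$ blocks) in which the single negative glue forces the first tile of the gadget to match \emph{both} positive input glues in order to bind with net strength~$1$. Your tile-complexity accounting and your expectation that the gadgetry is shared with the zig-zag simulation of Section~\ref{sec:zig-zag} are both on target.

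One gap to close: you define $\mathcal{T}'$ as ``this counter'' realized as a zig-zag system, but a zig-zag counter produces a thin rectangle of width $O(\log_b N)$, not a square, so the sentence ``each macro-tile fills its $c\times c$ block, the overall assembly fills a $(cN')\times(cN')$ region'' does not follow from what you have set up. The paper handles the non-counter portion with an explicit \emph{diagonal shifter}: $O(1)$ additional gadget types grow zig-zag columns off the top-right of the counter, passing a token diagonally down and right until it reaches the bottom, thereby squaring off the assembly; a few hard-coded filler rows of length $O(\log_b N)$ along the bottom and $O(1)$ along the right absorb the rounding. Your phrase ``capping off with a constant number of boundary tile types'' is fine at temperature~$2$, but the standard temperature-$2$ fill tile binds cooperatively from two sides in a pattern that is \emph{not} the west/south zig-zag input pattern your macro-tile snake is designed for. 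You should either extend your macro-tile repertoire to cover those input directions, or (as the paper does) redesign the fill itself as a zig-zag process so that your existing gadgets apply throughout.
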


We now sketch our construction for Theorem~\ref{thm:efficient_square}.  Figure~\ref{fig:efficient_square_higher_level} shows a very high-level overview of the main components of the construction.  The general components are 1. a counter (whose base will be discussed later) which begins at a hard-coded value (also to be discussed) that counts in a zig-zag manner, with one row performing an increment and the next performing a check to see if the counter should halt, to its maximum value to form the left side of the square, 2. a set of zig-zag columns which pass a token diagonally downward and right from the top of the counter (stopping once that token reaches the bottom) to form the majority of the square, and 3. two sets of tiles which form short rows off of the bottom and right side of the square to make its dimensions exactly $N$.

\begin{figure}[htp]
    \begin{center}
    \includegraphics[width=3.5in]{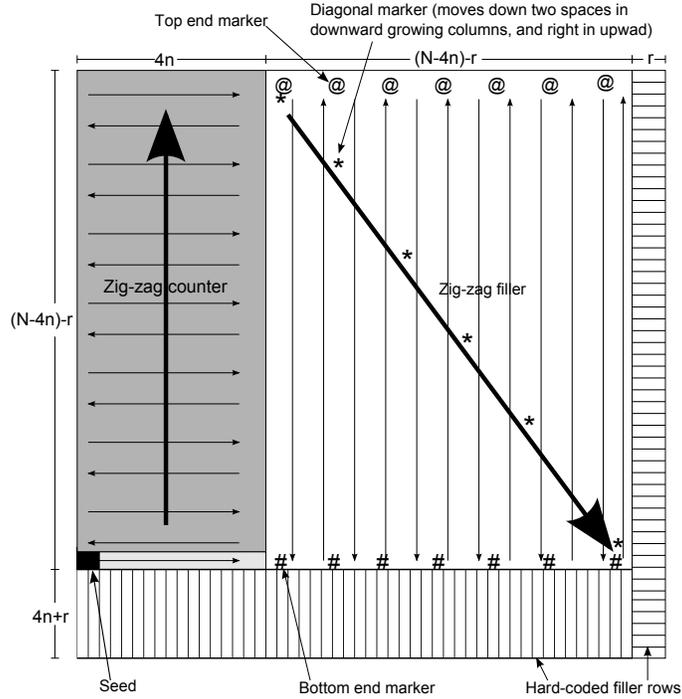} \caption{\label{fig:efficient_square_higher_level} \small A high-level overview of the construction of a tile type efficient $N \times N$ square.}
    \end{center}
\end{figure}

In order to mimic the cooperativity which occurs in temperature $2$ self-assembly (i.e. the attachment of a tile to an assembly via two strength-$1$ bonds), we simulate individual tiles using ``gadgets'' which are approximately $4 \times 4$ squares of tiles as pictured in Figure~\ref{fig:efficient_square_glue_example}.

\begin{figure}[htp]
    \begin{center}
    \includegraphics[width=4.5in]{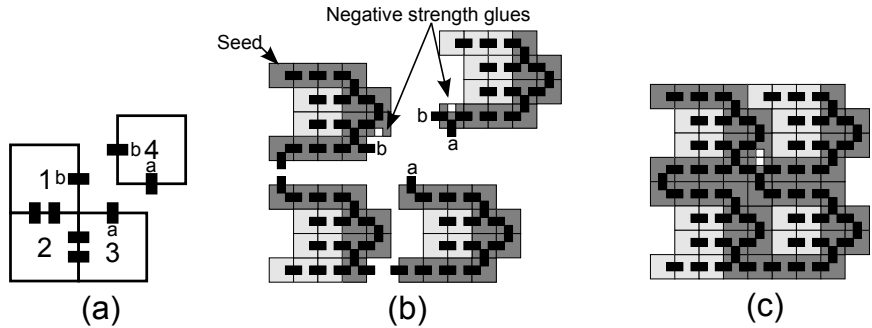} \caption{\label{fig:efficient_square_glue_example} \small An example using gadgets to simulate a temperature $2$ TAS (a) within an rgTAS (b).  Assume that the tiles in (a) attach in the order given by their numeric labels.  Each such tile is simulated in (b) by a roughly $4 \times 4$ square gadget of tiles.  Assembly in (b) begins with the top left tile of the gadget representing tile 1.  It then effectively proceeds by following the path through the dark grey tiles provided by neighboring bonds (the light grey tiles can fill at any point afterward).  The gadgets form one tile at a time along the dark grey path, and attached to each other, but are shown slightly separated in this figure to denote their boundaries.  Since an rgTAS is a temperature 1 system, strength-$1$ bonds in (b) can simulate the strength-$2$ bonds of (a).  However, to simulate the cooperativity of tile 4 attaching, a negative strength glue must be utilized.  Negative glues are represented as white squares on tile faces, and by contributing a $-1$ force to the strength of attachment, they ensure that the first tile in the gadget representing tile 4 must match both glues $a$ and $b$ in order for the sum of all three glue strengths to equal $1$ and allow the tile attachment which initiates the growth of the rest of the gadget.  It should also be noted that gadgets which use a pairing of a negative and a positive glue to form an ``output'' glue must enforce that the tile exposing the negative glue is placed first, to prevent potentially incorrect tile attachments to the positive glue.}
    \end{center}
\end{figure}

\begin{figure}[htp]
\begin{center}
    {\subfloat[{\small Gadgets which form the zig-zag counter.}]
    {\label{fig:efficient_square_counter_gadgets}\includegraphics[width=2.3in]{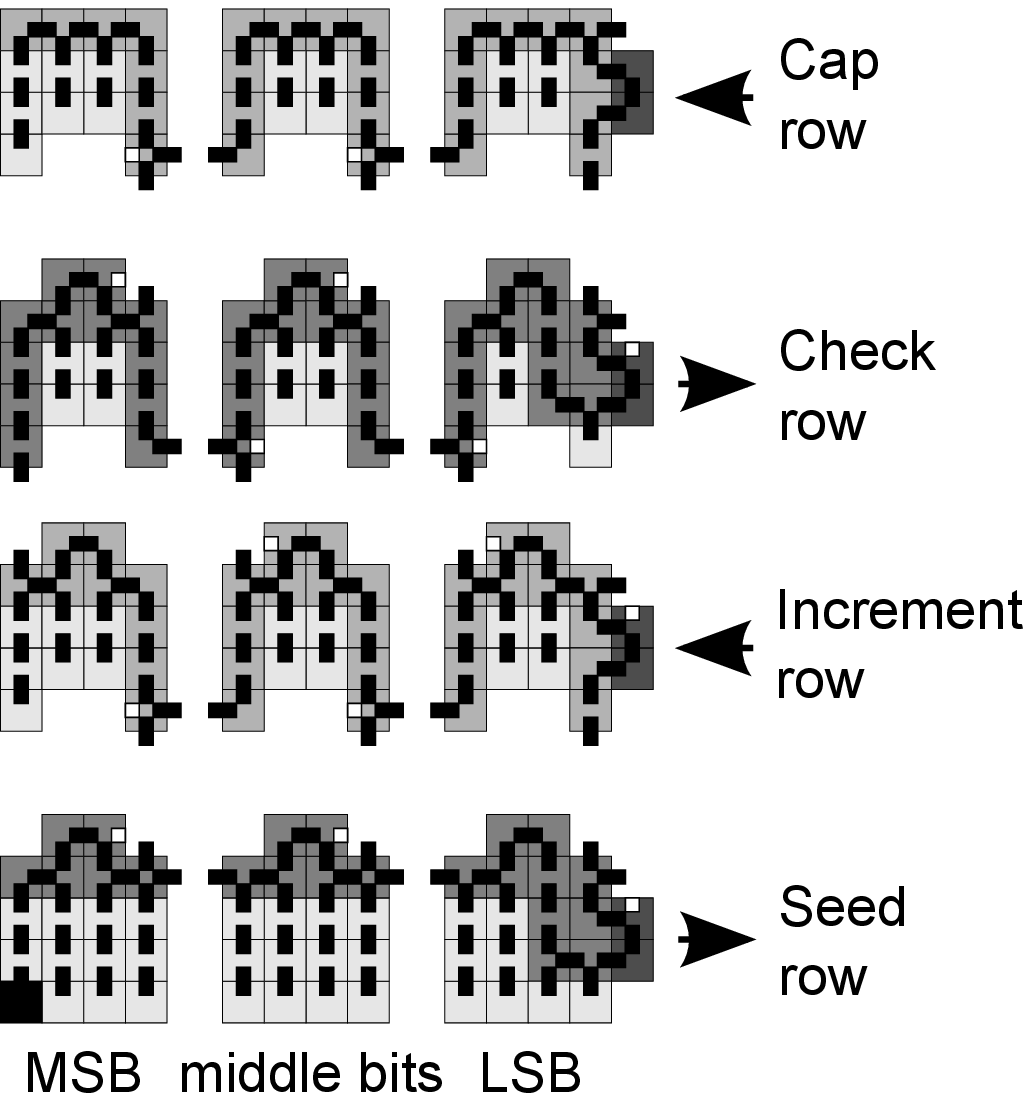}}}
    \quad
    {\subfloat[{\small Gadgets which form the zig-zag shifter.}]
    {\label{fig:efficient_square_shifter_gadgets}\includegraphics[width=2.0in]{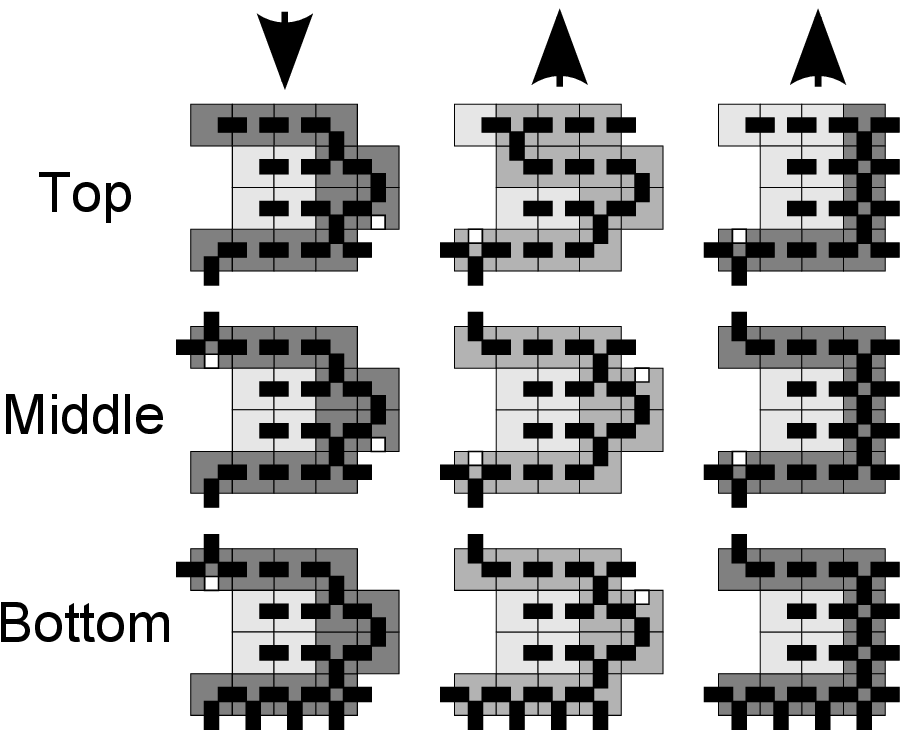}}}
    \vspace{-5pt}
    \caption{\small Main gadgets which form the tile type optimal $N \times N$ square.}
    \label{fig:efficient_square_gadgets}
    \vspace{-30pt}
\end{center}
\end{figure}

\begin{figure}[htp]
    \begin{center}
    \includegraphics[width=6.0in]{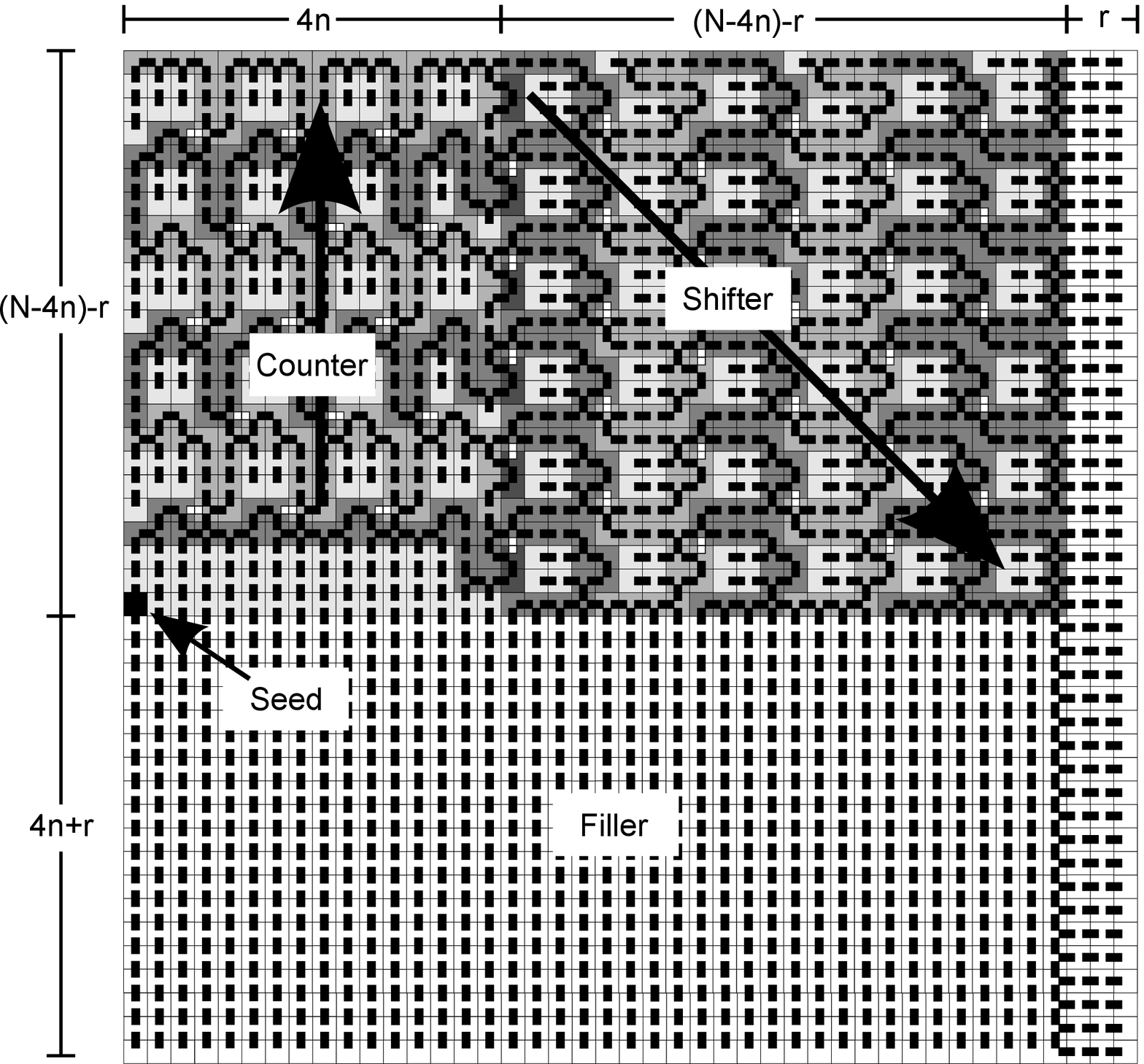} \caption{\label{fig:efficient_square_high_level} \small An construction of a tile type efficient $N \times N$ square, showing the structure of the gadgets. Note that the counter and shifter components are not shown to scale but would actually greatly dominate the square compared to the filler rows, with the shifter forming the majority.}
    \end{center}
\end{figure}

In order to achieve the optimal tile type complexity mentioned above, we make use of the technique introduced in \cite{AdChGoHu01} in which they convert a binary number $N$ into a base $b$ number, where $b$ is the power of $2$ such that $
   \frac{\log N}{\log \log N} \leq b = 2^k  < \frac{2\log N}{\log \log N}
$ for some positive integer $k$.  This ensures that the number of positions in the base-$b$ representation of $N$ is $n = \frac{\log N}{\log b} = O\left(\frac{\log N}{\log \log N}\right)$. Thus, since each gadget in our optimal square construction representing a position of that value requires a height of $4$ tiles, the number of rows of tiles necessary for an increment row--plus a check row is--$8$. Additionally, the width of the counter will be $4n$.  Let $c = \lfloor \frac{N-4n}{8} \rfloor$ be the number of pairs of rows of gadgets that the counter should assemble so that the diagonal shifter tiles can form the necessary dimensions.  Next, let $r = (N - 4n) \mod 8$ be the remainder necessary to pad the square out to exactly $N$.  Finally, let $s = b^{\lceil \log_b c \rceil} - c$ be the value at which the counter begins so that, by incrementing $c-1$ times, it will reach its maximum value of $b^{\lceil \log_b c \rceil}-1$. The value $s$ is logically encoded into the gadgets that grow from the seed to form the first row of the binary counter in our construction.

By allowing hard-coded rows of tiles of length $4n+r$ to attach to the bottom and rows of tiles of length $r$ to the right, our square construction exactly achieves the specified dimension of $N \times N$.

The tile complexity of the initial row of the counter is $O(n) = O\left(\frac{\log N}{\log \log N}\right)$.  The tile complexity of a base $b$ counter is $O(b) = O\left(\frac{\log N}{\log \log N}\right)$ and since the tile complexity blowup caused by creating the gadgets is $O(1)$, the tile complexity of the entire counter is $O\left(\frac{\log N}{\log \log N}\right)$.  Note that $O(1)$ tile types are required for the diagonal shifter component.  Finally, the bottom and right filler rows require $O(n) = O\left(\frac{\log N}{\log \log N}\right)$ and $O(1)$ tile types, respectively.  Therefore, the tile complexity of the overall construction is $O\left(\frac{\log N}{\log \log N}\right)$.


\subsection{$O(\log N)$ tile complexity with a prgTAS}\label{sec:nearly-optimal-squares}
We now present a construction using a prgTAS which is slightly less optimal than the previous rgTAS construction in terms of tile type complexity in that it self-assembles an $N \times N$ square using $O(\log N)$ tile types.

\begin{figure}[htp]
    \begin{center}
    \includegraphics[width=4.0in]{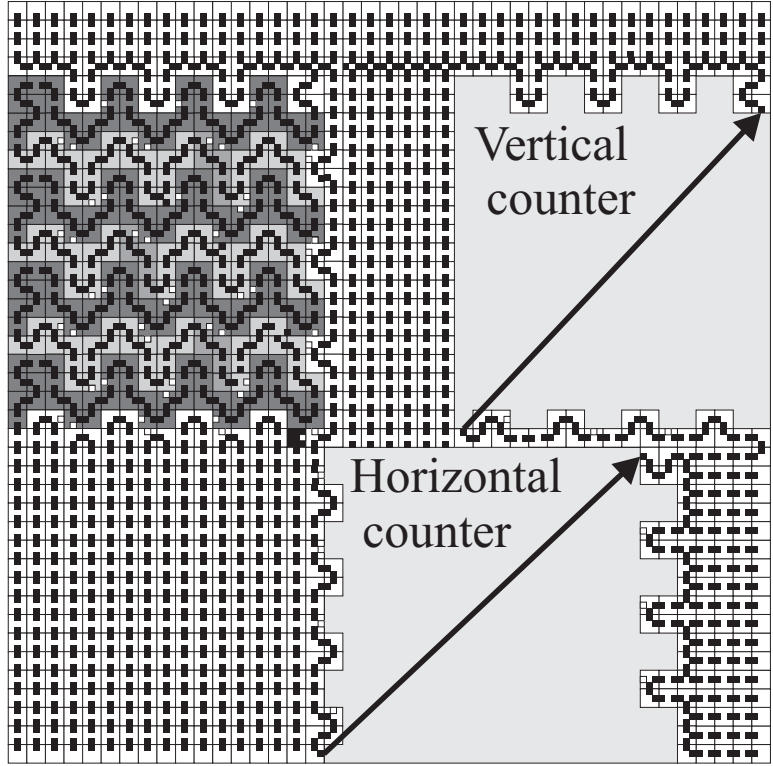} \caption{\label{fig:the_square} \small The negative glue is denoted as a little white square. Positive glues are denoted as little black squares. The plain white tiles represent filler tiles. Our construction is simple: it merely assembles a `U' shape via three counters and then fills in the ``interior'' of the `U' via generic filler tiles. In this example, $N = 41$, $n = 8$, $k = 3$, $K = 17$, $n_0 = 4$ and $x = 4$. The seed tile type is represented by the black square.}
    \end{center}
    \vspace{-20pt}
\end{figure}

\begin{theorem}
\label{square_theorem} For all $N \in \mathbb{N}$, there exists a restricted TAS $\mathcal{T} = (T_N, \sigma, 1)$, such that $S_N$ strictly self-assembles in $\mathcal{T}$, $\mathcal{T}$ is directed, and $|T_N| = O(\log N)$.
\end{theorem}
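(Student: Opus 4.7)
The plan is to assemble the boundary of the target square as a ``U'' shape---left leg, bottom base, right leg---using three copies of a single binary counter, and then to fill the remaining interior rectangle with $O(1)$ generic filler tile types. Let $k=\lceil\log_2 c\rceil$ for a suitable $c$ depending on $N$ and set $n=2^k$, so that a base-$2$ counter of bit-width $k$ occupying a strip of width $n$, initialized at a hard-coded value $x$ and counting up to $2^k-1$, produces exactly the number of rows needed for the legs to have height matching the base's length, with any residual slack filled by a short hard-coded row of $O(\log N)$ tiles. The parameters $c$, $x$, and the padding lengths are chosen analogously to the proof of Theorem~\ref{thm:efficient_square} so that the outer bounding box of the U together with its interior is exactly $S_N$.

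The crux of the construction is a binary counter that works as a prgTAS, i.e.\ at temperature $1$ using only strength-$\pm 1$ positive glues together with a single negative glue of unknown strength $-s$ with $s\geq 1$. The counter grows in zig-zag fashion, one row per increment or carry-propagation step. The critical design principle, and the point of departure from the rgTAS argument, is that the negative glue is used \emph{only as a veto}: at every placement site the unique correct tile attaches via a single strength-$1$ positive bond (net strength $+1$), whereas each alternative tile type is designed so that wherever its positive glue would match it is also forced to abut a negative glue on another face, giving net strength $1+(-s)\leq 0$ and preventing attachment \emph{regardless} of the precise value of $s$. This contrasts with the rgTAS gadget trick $1+1+(-1)=1$, which fails in the prgTAS setting since $1+1+(-s)$ may be $\leq 0$ once $s>1$; consequently the $4\times 4$ gadgets of Theorem~\ref{thm:efficient_square} cannot be reused, and we must commit to the more restrictive rule that every accepted tile binds by a \emph{single} positive bond.

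Once the counter is in hand, three oriented copies of it form the U: one grows upward from the seed as the left leg, one grows rightward as the bottom base, and the third grows upward from the southeast corner as the right leg. With $k$, $x$, and the padding lengths chosen correctly, the outer dimensions of the U are exactly $N\times N$. The three inner sides of the U expose a fixed pattern of glue colors, so the interior rectangle is then tiled deterministically by a constant-size filler tile set that simply propagates colors. Summing contributions, each counter uses $O(k)=O(\log N)$ tile types (plus $O(1)$ turnaround tiles at the ends of its zig-zag rows), the hard-coded starting row uses $O(\log N)$, and the filler uses $O(1)$, so $|T_N|=O(\log N)$ overall. Directedness of $\mathcal{T}$ follows because at every growth site exactly one tile type can attach.

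The main obstacle is designing the binary counter gadget so that the ``veto only'' use of the negative glue is expressive enough to handle every local context that arises: an ordinary bit cell in the middle of a row, a bit cell at either end of a zig-zag row, the transition between an increment row and a carry-propagation row, the reset on carry overflow, and the two seams where a vertical leg meets the horizontal base. For each such context I will enumerate the candidate tile types and verify that (i) exactly one of them has net strength $\geq 1$ with its already-placed neighbors, and (ii) every other one has at least one positive-against-negative adjacency giving net strength $\leq 0$, using only single-sided positive bonds on the ``accept'' side. Once this case analysis is complete, the remaining bookkeeping---computing $k$, $x$, and the padding lengths from $N$; counting tile types; and checking that $\dom\alpha = S_N$ for the unique $\alpha\in\termasm{T}$---is routine and mirrors the corresponding steps in the proof of Theorem~\ref{thm:efficient_square}.
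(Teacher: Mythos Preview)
Your plan is essentially the paper's own construction: a U-shaped backbone made of three zig-zag binary counters, cooperation emulated by geometry (bumps/dents) with the single negative glue used purely as a veto so the correct tile always attaches by one positive strength-$1$ bond, and the interior completed by $O(1)$ filler types; the tile-count bookkeeping is the same. Two implementation details from the paper are worth adopting so that your deferred case analysis goes through cleanly: (i) the three counters are strictly sequenced---the first (left) counter finishes, then a constant-type path crawls back \emph{down} its right side (using the negative glue as a ``keep going'' signal and the seed's distinguished negative glue as ``stop'') to launch the horizontal counter, which in turn launches the third---rather than growing concurrently from the seed; and (ii) the interior is not filled directly from the three U-sides but by a cap path that the third counter shoots leftward across the top (blocked by the first counter), from which filler teeth hang down---this is what bounds the filler above, a point your ``three inner sides expose a fixed pattern'' sketch leaves open.
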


\begin{figure}[htp]
    \begin{center}
    \includegraphics[width=1.50in]{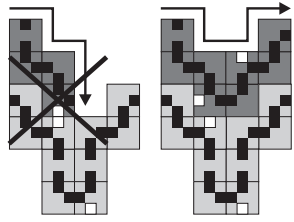} \caption{\label{fig:cooperation_gadget} \small We emulate ``temperature-2'' style cooperation by using geometry along with the careful placement of the negative glue in order to ensure that only the ``correct'' tile in a particular location of a path of tiles attaches and therefore can ``know'' if it is supposed to be, for example, a `1' or a `0' bit. We use this technique extensively throughout this paper. }
    \end{center}
\end{figure}

In what follows, we briefly sketch our construction for Theorem~\ref{square_theorem}. Let $n = \left\lfloor \frac{N - 1}{5} \right\rfloor$, $k = \left\lceil \log n \right\rceil$, $K = 5 + 4k$, $n_0 = 2^k - n + \left\lceil\frac{K}{5}\right\rceil$ and $x = N - \left(K + 5\left(n - \left\lceil\frac{K}{5}\right\rceil\right)\right)$. Intuitively, $N$ is the dimension (length of one side) of the target square $S_N$, $n$ is the number of count/increment row pairs that we will need in our construction, $k$ is the \emph{logical} width of the counter, $K$ is the \emph{actual} width of the counter in our construction, $n_0$ is the initial value for the counter, $2^k - 1$ is the maximum value of the counter and $x$ is the number of rows on top of the counter that we need to fill in with generic ``filler'' tiles. Although not necessarily surprising, it is worthy of note--and easy to show--that $x \leq 9$.
In Figure~\ref{fig:the_square}, we show a high-level overview of the terminal assembly produced by our construction (many details are omitted).

In our construction, we emulate the zig-zag counter of Rothemund and Winfree \cite{RotWin00}. We utilize three different binary counters in our construction, denoted as the \emph{first}, \emph{second} and \emph{third} counter and oriented vertically, horizontally and vertically respectively. We will discuss the general behavior of our north-growing zig-zag counter and highlight any subtle differences between the two other versions of it that we use in our construction.

The binary counter consists of a seed row, which encodes some number in binary, on top of which some number of increment/copy row pairs self-assemble in a zig-zag fashion. The top of the counter is capped off with a special cap gadget.

\begin{figure}[htp]
    \begin{center}
    \includegraphics[width=3.50in]{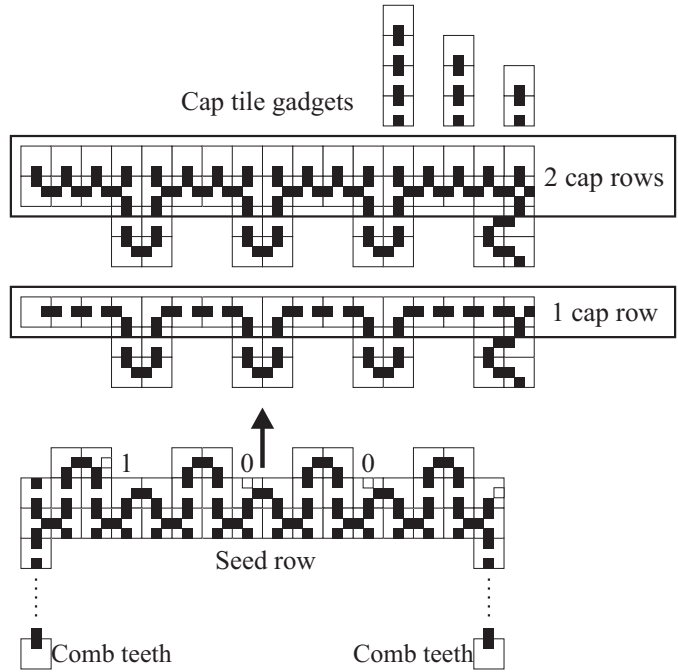} \caption{\label{fig:seed_row_cap_gadget} \small The bumps and dents along the north side of the seed row encode the initial value of the counter, denoted as $n_0$. In this example, $n_0 = 4$. The cap gadgets (of various sizes) are shown above the seed row. The number of cap tiles is $x$.}
    \end{center}
\end{figure}

\textbf{The Seed Row.} The seed row is a row of tiles that encodes the initial value of the binary counter $n_0$ using $k = \left\lceil \log n \right\rceil$ bits and has a horizontal extent of $K - 1$. The counter starts counting at this value and stops at $2^k - 1$. We encode the bits of $n_0$ via the careful placement of the negative glue (denoted as a little white square in all of our figures). The bit 0 is encoded by positioning the negative glue so that it is facing north in a dent and a 1 is encoded by positioning the negative glue so that it is facing east in a dent; see Figure~\ref{fig:seed_row_cap_gadget}. This bit encoding scheme is also used in count rows whereas slightly different encoding is used for copy rows. Off the bottom of the seed row, teeth of a ``comb'' attach in order to fill in the bottom left corner of the square. Each tooth has length $K$ and self-assembles to the south. The actual length of the seed row--and hence the actual width of our counter in this construction--is $K$.

\begin{figure}[htp]
    \begin{center}
    \includegraphics[width=3.0in]{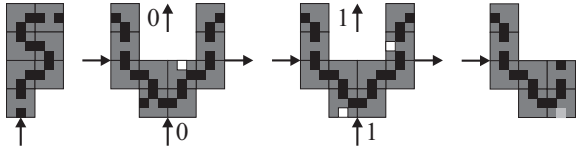} \caption{\label{fig:copy_row_bit_gadgets} \small Copy rows copy the bits advertised on the north side of the previous increment (or seed) row up for the next increment row. Copy rows encode each bit according to the ``mirror-image'' of the encoding utilized by the seed and increment rows. The lighter-grey square in the lower right tile represents a negative glue that \emph{may or may not} be present depending on whether or not the copy row is the first copy row to appear in the counter. These negative glues are used initiate the self-assembly of the second binary counter.}
    \end{center}
\end{figure}

\textbf{The Copy Rows.} Copy rows self-assemble on top of increment rows (including the seed row) from left to right and have horizontal extent $K$ (the actual width of the counter). Copy rows consist of a sequence of bit gadgets that utilize geometry and the careful placement of the unique negative (white) glue in order to emulate cooperations (see Figure~\ref{fig:cooperation_gadget}). In our construction, we have a one bit gadget for every bit in the binary representation of $n_0$ (this information is encoded directly into the bit gadget so that it knows which bit it is, e.g., most significant, least significant, third, etc). The bit gadgets that comprise each copy row are shown in Figure~\ref{fig:copy_row_bit_gadgets}. In our construction, if a copy row reads a string of 1 bits, i.e., $2^m - 1$ for some $m \in \mathbb{N}$, it will terminate the counter and allow the cap gadget to attach.

\textbf{The Increment Rows.} Each increment row increments the value of the counter by 1. Increment rows self-assemble from right to left (compared to left to right for copy rows--hence the zig-zag nature of our counter). Similar to copy rows, increment rows consist of a sequence of (a different type of) bit gadgets that each know ``which'' bit they represent. The bit gadgets for increment rows are shown in Figure~\ref{fig:increment_row_bit_gadgets}.

\begin{figure}[htp]
    \begin{center}
    \includegraphics[width=1.75in]{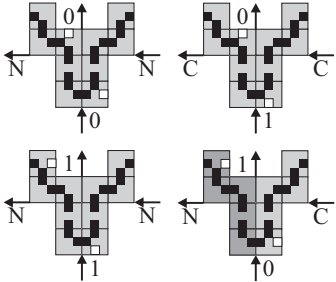} \caption{\label{fig:increment_row_bit_gadgets} \small The increment row bit gadgets read the bits of the previous copy row. The bit gadgets for increment rows emulate the standard binary counter tile types, such as those depicted in Figure 1 of \cite{RotWin00}. For each bit in the binary representation of $n_0$, we have four types of bit gadgets. The inputs are always south (0 or 1 bit value) and east (carry/no-carry).}
    \end{center}
\end{figure}

The value of the final increment row in our counter is $2^k - 1$ giving the counter an actual height of $5\left(n - \left\lceil \frac{K}{5} \right\rceil\right)$ rows of tiles. The bit pattern of $2^k - 1$ is detected by the (final) copy row, which terminates the counting. On top of the final copy row of the counter, a special cap tile gadget attaches, which is a path of tiles that fills in--and smooths out--the top of the jagged zig-zag counter. For each value of $x$ (ranging from 1 to 9), we use a different cap tile gadget. The top portion of Figure~\ref{fig:seed_row_cap_gadget} shows the two types of cap gadgets that we use in our construction--one allows additional ``comb teeth'' (each of varying height/length depending on $x$) to attach and the other that simply caps the counter.

\textbf{Completing the Square.} After--and only after--the first binary counter completes, may the construction proceed. To the upper right corner of the first binary counter, a path of tiles crawls down along the right side of the first vertical counter toward the seed tile. This path of tiles detects the seed tile via the south-facing negative glue in the lower rightmost tile in each copy row. A south-facing negative glue tells this path of tiles to ``keep going.'' Only the black seed tile type has an east-facing negative glue, which tells the path to ``stop'' and build the seed row for the second (horizontal) counter. Note that we do not encode any location information into these tiles that crawl down the right side of the first counter, which means that there are $O(1)$ such tile types participating in the formation of the path.

The second binary counter (the base of the `U' backbone) behaves similarly to the first counter except its top (logically, its least significant bit) is completely smooth so as to allow the seed row of the third and final (vertical) binary counter to attach. Furthermore, the cap tile gadget for the second counter places one more row of cap tiles on top of (actually, to the right of) the second counter than the cap tile gadget did for the first counter to ensure that the terminal structure is a square. The cap tile gadget for the second counter also initiates the self-assembly of the seed row for the third--and final--binary counter.

\begin{wrapfigure}{l}{3.00in}
    \begin{center}
    \includegraphics[width=2.50in]{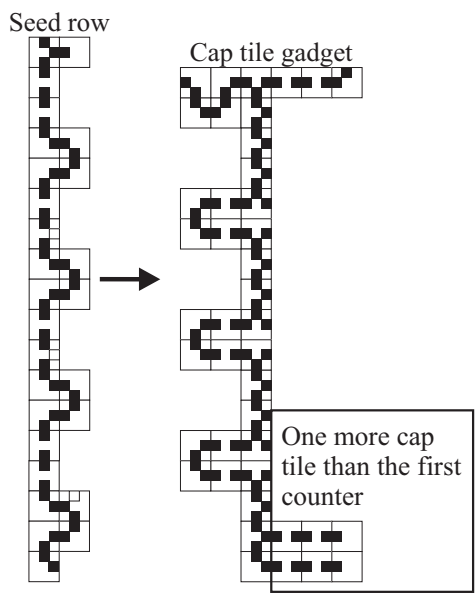} \caption{\label{fig:second_seed_cap_gadget} \small The cap tile gadget ensures that there is one additional row of cap tiles to compensate for the relative positions of the first two counters. }
    \end{center}
\end{wrapfigure}

The third (and final) counter, which happens to grow in a vertical fashion, in our construction completes the `U'-shaped backbone of the nearly-optimal square construction. This counter behaves similar to the first counter except its right edge is completely smooth. We also use a third type of cap tile gadget to form the smooth top of the square. This third type of cap gadget allows comb teeth (whose size depends on $x$) to bind to its north side and also shoots a path of tiles off to the left and back toward the first counter.

This path of tiles is eventually blocked by the first counter, but as this path self-assembles to the left, it allows filler tiles to fill in the interior of the square (see Figure~\ref{fig:third_seed_cap_gadget} for an example) and comb teeth to attach on top.

\textbf{Tile Complexity.} We use $O(1)$ generic filler tiles (white tile types in our figures) that either attach on top of (or to the right of) cap rows or fill in the interior of the square. There are $O(1)$ tiles that crawl down the right side of the first binary counter. The filler tiles that fill in the bottom left corner of the square must grow to length $O(K) = O(k)$ and stop for which $O(k)$ unique tile types suffice. Finally, we must encode the appropriate bit location into every bit gadget of the seed row and every copy, increment and cap tile gadget. Since there are $O(1)$ types of bit gadgets for each row and $k$ bit locations in each of the three different counters that we use, the tile complexity of our construction is dominated by $O(k) = O(\log N)$.

\begin{figure}[htp]
    \begin{center}
    \includegraphics[width=4.5in]{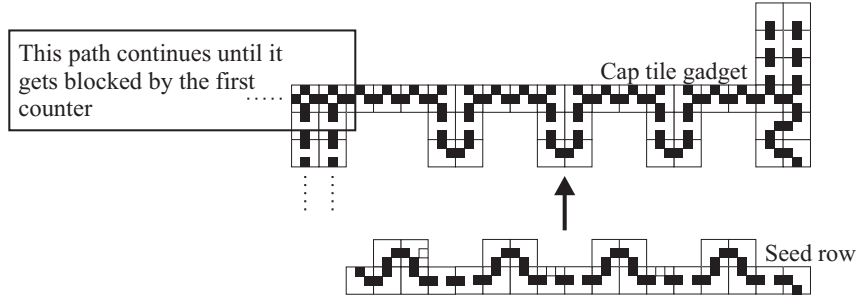} \caption{\label{fig:third_seed_cap_gadget} \small The cap gadget for the third counter shoots a path of tiles in the direction of--and is blocked by--the first counter. This ``off-shoot'' allows any necessary cap tiles to attach on its top and it also initiates the self-assembly of the interior of the square via generic filler tiles.}
    \end{center}
\end{figure}

It is interesting to note that our nearly-optimal construction (NOC) is essentially a spanning tree, much like the ``$2N - 1$'' construction of \cite{RotWin00}. However, in our NOC, we are allowed to use a single negative glue type, which--in conjunction with some clever use of geometry--allows us to emulate the cooperativity of ``temperature 2'' self-assembly. Furthermore, the longest simple path of tiles in Rothemund and Winfree's ``$2N - 1$'' construction is $2N - 1 = O(N)$ whereas the longest simple path of tiles in our NOC is $O\left(N \log N\right)$ but our NOC ensures that the length of every simple (un-blocked) path of tiles cannot exceed $O(\log N)$ without encountering the negative glue. 
\section{Turing Universality}\label{sec:zig-zag}

In this section, we show that for every zig-zag TAS in the aTAM, there is an rgTAS and a prgTAS that simulates it.  The simulation by an rgTAS requires only a constant factor increase in tile complexity and a constant size increase in scale factor over the simulated system.  The simulation by a prgTAS, on the other hand, requires an asymptotically similar increase in tile complexity, but an increase in scale factor that grows as the $\log$ of the size of the tile set of the TAS being simulated.

\subsection{Compact zig-zag simulation with an rgTAS}\label{sec:zig-zag-rgtas}
\begin{theorem}
\label{thm:zigzagSimulation1NegGluergTAS}
For every compact zig-zag TAS $\mathcal{T} = (T,\sigma,2)$, there exists an rgTAS $\mathcal{S} = (S,\gamma,1)$ such that $\mathcal{S}$ path simulates $\mathcal{T}$ at scale factor $12$ with $|S| = O( |T|)$.
\end{theorem}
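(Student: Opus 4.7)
The plan is to build, for each tile type $t \in T$, a constant-size gadget realized as a fixed path of tiles that collectively simulates $t$'s binding behavior. First I would fix a uniform template occupying a roughly $12$-tile footprint, with four designated ``ports'' (south, west, north, east) at canonical positions on its boundary, so that gadgets for distinct tiles can be concatenated horizontally and stacked vertically exactly as the tiles of $\mathcal{T}$ would assemble. The central design principle, borrowed from the square construction illustrated in Figure~\ref{fig:efficient_square_glue_example}, is the \emph{cooperation sub-gadget}: a tile bearing the unique negative glue is placed so that the subsequent tile on the path can only attach if it simultaneously matches a positive glue contributed by a second, orthogonal neighbor. The net binding strength is $1+1-1=1$, which meets the $\tau=1$ threshold exactly when both positive inputs are present, thereby faithfully emulating a strength-$2$ cooperation in $\mathcal{T}$.

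Second, I would route the internal path so that it enters the gadget at its south port, performs a short initialization segment, turns to expose the negative glue adjacent to the horizontal input port (east if the current simulated row of $\mathcal{T}$ grows westward, or west if it grows eastward), executes the cooperation check that reads the horizontal input, and finally extends out through the opposite horizontal port while stubbing a tile above to expose the simulated north glue. Because $\mathcal{T}$ is \emph{compact}, each tile has only one horizontal input and one horizontal output, and the ``turn-around'' tiles at the ends of each row require only one north-going tile; these geometric constraints are exactly what permits the path-length bound of $12$. The glue labels on the ports are derived deterministically from $\col_t$ and $\strength_t$, so only $O(1)$ distinct tile types are needed per original tile type and $|S|=O(|T|)$.

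Third, I would verify path simulation itself. Growth in $\mathcal{S}$ is deterministic because at every step exactly one tile can attach with net strength $\ge 1$, so $\mathcal{S}$ has a unique assembly sequence. Partitioning this sequence into the consecutive blocks corresponding to each completed gadget gives the indices $0=i_{-1}<i_0<\cdots<i_k$ with $\max (i_j-i_{j-1})\le 12$, and the map $f$ that sends each completed gadget to the simulated tile type in $T$ is clearly computable; matching this to the unique assembly sequence of $\mathcal{T}$ is then a straightforward induction on the tile index, using the fact that the simulated zig-zag order of attachment in $\mathcal{T}$ is mirrored precisely by the order in which successive gadgets begin growing in $\mathcal{S}$.

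The main obstacle is the correctness of the cooperation sub-gadget. Because only one negative glue type exists, the same repulsive glue may appear in many gadgets simultaneously, and one must rule out any spurious attachment on the boundary of a gadget that would allow growth to proceed when only one of the two simulated inputs is actually present. This requires two safeguards, both to be checked by exhaustive case analysis over the finite port alphabet: (i) each gadget must commit the negative-glue-bearing tile strictly \emph{before} any tile that exposes the corresponding positive glue on the same output face, exactly as warned in the caption of Figure~\ref{fig:efficient_square_glue_example}; and (ii) the twelve-tile geometry must guarantee that no tile outside the intended path ever finds a location with strength-$1$ net attractive force. Establishing these two properties for every boundary configuration that can arise during zig-zag growth will constitute the bulk of the formal proof.
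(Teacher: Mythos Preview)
Your proposal is essentially the paper's own approach: replace each $t\in T$ by a constant-size (twelve-tile) macro-tile, simulate temperature-$2$ cooperation by the $1+1-1=1$ trick with the unique negative glue, enforce the ``negative before positive'' ordering on output faces, and verify path simulation by chopping the unique assembly sequence of $\mathcal{S}$ into length-$12$ blocks. The tile-complexity count and the correctness safeguards you list match the paper's sketch.

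There is one concrete slip in your routing that you should fix. You have every interior macro-tile \emph{enter at its south port} and then perform the cooperation check on the \emph{horizontal} input. But in a zig-zag assembly sequence the immediate predecessor of an interior tile is its horizontal neighbor in the same row, not the tile below it; the tile below was placed much earlier, during the previous zig or zag. Hence, for $\mathcal{S}$ to be a single path that mirrors the assembly order of $\mathcal{T}$, each interior macro-tile must be entered from its horizontal (east or west) port by the just-completed neighboring macro-tile, and the cooperation sub-gadget must read the \emph{south} glue left behind by the macro-tile in the row below. Only the turn-around macro-tile at each row end is entered from the south (via the simulated strength-$2$ bond, which needs no cooperation check at all). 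This is exactly how the paper's Figure~\ref{fig:compact-zig-zag-gadgets} organizes the cases: separate macro-tile templates keyed to the input/output side combination, rather than one uniform template with four ports. Swapping the roles of ``south'' and ``horizontal'' in your second paragraph fixes the issue and brings your construction in line with the paper's.
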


The remainder of this subsection is devoted to a brief, intuitive sketch of our construction for Theorem~\ref{thm:zigzagSimulation1NegGluergTAS}.

Intuitively, $\mathcal{S}$ simulates $\mathcal{T}$ by logically converting each tile type $t \in T$ into a group of tile types in $S$ that self-assemble into a \emph{macro-tile}.  Since each macro-tile is the same size ($12$ tiles), it is easy to compute the indices $i_{-1} < i_0 < \cdots i_{k}$ in the definition of path simulate.

As shown in Figure~\ref{fig:compact-zig-zag-gadgets}, each $t \in T$ which can be placed in a growing compact zig-zag assembly has well-defined ``input'' and ``output'' sides (for convenience, and without loss of generality, we fix the seed row as growing from left to right). The corresponding macro-tiles are also depicted in Figure~\ref{fig:compact-zig-zag-gadgets}.  (Note that the particular shapes and sizes of the macro-tiles are designed so that all macro-tiles have the same number of tiles and also exactly one path of assembly, in order to correspond to the definition of path simulate.)  Double strength bonds on $t$ are simply simulated with a single strength-$1$ glue in the proper position of its macro-tile since $\mathcal{S}$ is a temperature $1$ system. Strength-$1$ glues are simulated by ensuring that every time a glue simulating a strength-$1$ glue is placed, that the adjacent location - in which a tile would be placed to bind to that glue - already also has adjacent to it a singly copy of the negative strength glue.  This ordering is guaranteed by the careful design of the macro-tiles and their order of growth.  In this way, whenever a tile would be placed in $\mathcal{T}$ by attaching to exactly two strength-$1$ glues, in $\mathcal{S}$ a tile is placed which binds to two strength-$1$ glues and is also repelled by a single negative strength-$1$ glue, for a total binding force of strength $1$.  This provides a mechanism for cooperation by ensuring that both input glues are matched.

Figure~\ref{fig:compact-zig-zag-simulation} gives an example of how the growth of the first few rows of $\mathcal{T}$ are simulated by the growth of connected gadgets in $\mathcal{S}$.  It is clear that $\mathcal{S}$ will exactly path simulate $\mathcal{T}$ at scale factor $12$.  Since $\mathcal{T}$ is an arbitrary compact zig-zag TAS and $\mathcal{S}$ is an rgTAS, Theorem~\ref{thm:zigzagSimulation1NegGluergTAS} is proven.

\begin{figure}[htp]
    \begin{center}
    \includegraphics[width=6.0in]{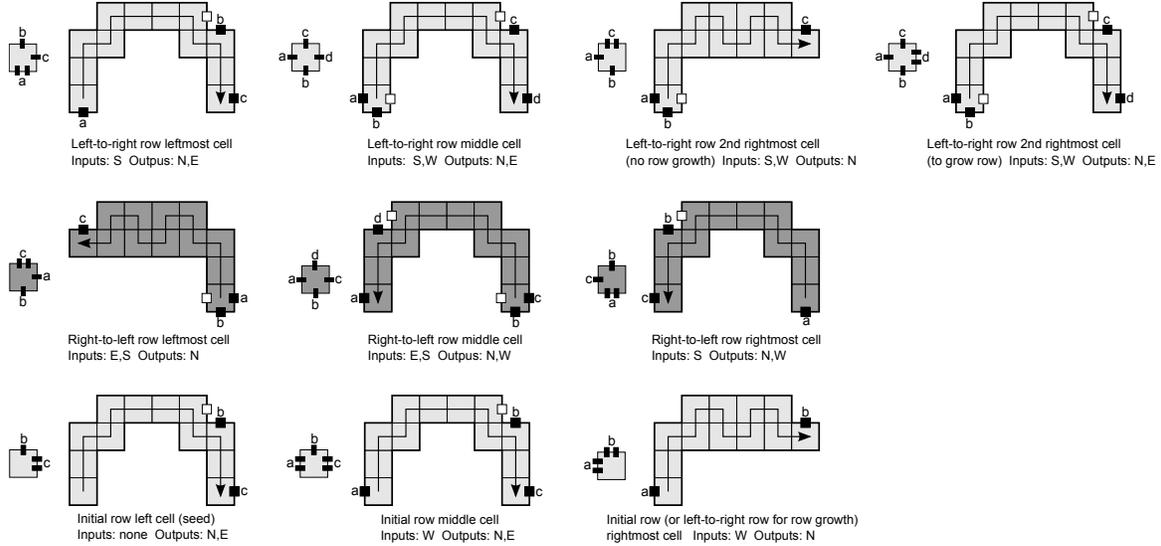} \caption{\label{fig:compact-zig-zag-gadgets} \small The individual tiles which attach to a growing compact zig-zag assembly (pictured as individual tiles on the left) and their corresponding macro-tiles (pictured to their right) which form to simulate those logically different tiles, based on their input and output sides.}
    \end{center}
    \vspace{-20pt}
\end{figure}

\begin{figure}[htp]
    \begin{center}
    \includegraphics[width=5.5in]{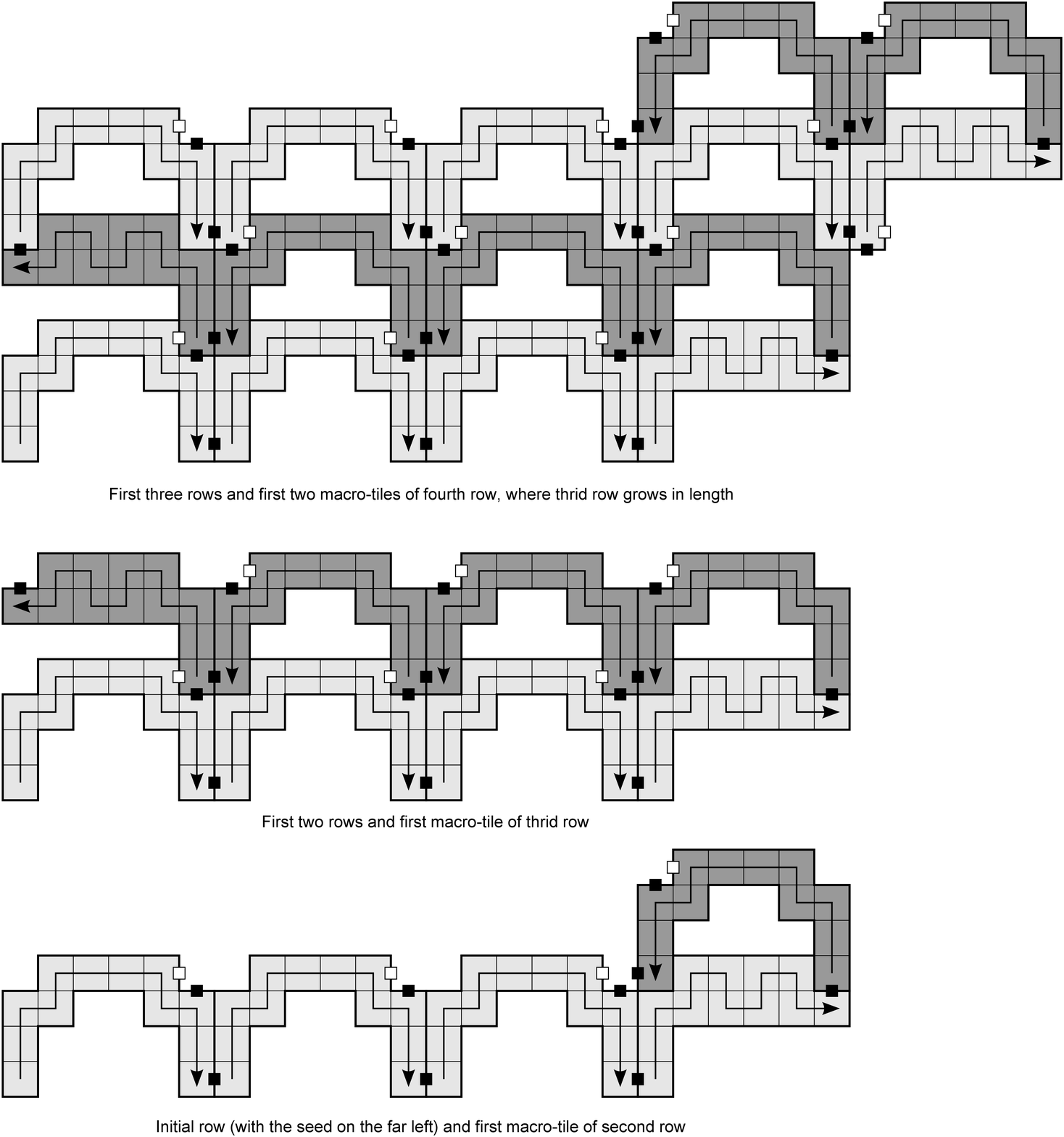} \caption{\label{fig:compact-zig-zag-simulation} \small An example of macro-tiles assembling to simulate the first few rows of a temperature $2$ compact zig-zag TAS using an rgTAS.}
    \end{center}
    \vspace{-20pt}
\end{figure}

\subsection{(Less) compact zig-zag simulation with a prgTAS}\label{sec:zig-zag-rgtas}
\begin{theorem}
\label{thm:zigzagSimulation1NegGlueprgTAS}
For every compact zig-zag TAS $\mathcal{T} = (T,\sigma,2)$, there exists a prgTAS $\mathcal{S} = (S,\gamma,1)$ such that $\mathcal{S}$ path simulates $\mathcal{T}$ at scale factor $O(\log |T|)$ with $|S| = O( |T|)$.
\end{theorem}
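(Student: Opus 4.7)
The plan is to adapt the rgTAS construction of Theorem~\ref{thm:zigzagSimulation1NegGluergTAS}. The obstacle is that the rgTAS macro-tile achieves cooperation by summing two strength-$1$ positive glues with one strength-$(-1)$ negative glue for a net of exactly $+1$, which collapses in a prgTAS where the negative glue's magnitude is only bounded below by $1$; the sum can easily be non-positive and so would block even the correct tile. The remedy, borrowed from the prgTAS square construction of Section~\ref{sec:nearly-optimal-squares} (see Figure~\ref{fig:cooperation_gadget}), is to use the negative glue purely as a geometric blocker of incorrect tile placements, never as part of the net binding strength of the correct one. Each single-tile cooperation from $\mathcal{T}$ is then replaced by a cascade of $O(\log |T|)$ single-bit cooperation checks.

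I would first encode every glue type of $T$ as a binary string of length $\lceil \log |G| \rceil = O(\log |T|)$, where $|G|$ denotes the number of distinct glue colors in $T$. Each tile type $t \in T$ is simulated by a macro-tile whose assembly proceeds along one path of $O(\log |T|)$ tile placements. As the path traverses the macro-tile it reads the bits of the south and horizontal input glues in order: at the $i$-th checkpoint a constant-size bit-check gadget is encountered, and two candidate successor tiles (a ``bit~$0$'' and a ``bit~$1$'') share the same strength-$1$ track glue. The perpendicular face of whichever candidate is incorrect lies adjacent to the unique negative glue exposed by the encoded input, so that candidate would bind with net strength at most $1 - 1 = 0$ no matter how much larger in magnitude the negative glue actually is, while the correct candidate faces an inert edge and binds with strength $+1$. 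After all bits of both input glues have been verified, the path reaches a unique endpoint corresponding to the tile $t$ determined by the glue function of $\mathcal{T}$, and the same path then writes the binary encoding of $t$'s output glues along the north and opposite horizontal edges of the macro-tile.

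For the tile count, the bit-reader and bit-router tiles form a shared pool of size $O(\log |T|)$ indexed by bit position and partial-match state, while the output-writer portion contributes $O(1)$ new tile types per $t \in T$ (one per distinct output-bit pattern it must emit), yielding $|S| = O(|T|)$. The assembly path of each macro-tile has length $O(\log |T|)$, so the scale factor of the path simulation is $O(\log |T|)$, and the bookkeeping of indices $i_{-1} < i_0 < \cdots < i_k$ required by the path-simulation definition is uniform across macro-tiles since they all have the same length.

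The main obstacle will be ensuring that in every bit-check gadget the unique negative glue lies adjacent only to the wrong candidate, and never accidentally in range of the correct candidate or of any previously placed tile along the path. This demands that the macro-tile geometry present the two input glues on disjoint, canonical coordinate segments of its boundary and that the reading path visit each bit at a distinct location. Once that layout is fixed, the individual bit-checks replicate the already-validated cooperation gadget of Section~\ref{sec:nearly-optimal-squares}, so correctness and directedness of $\mathcal{S}$ follow by the same argument as in the proof of Theorem~\ref{thm:zigzagSimulation1NegGluergTAS}.
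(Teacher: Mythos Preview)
Your approach is essentially the paper's: encode glues as $O(\log |T|)$-bit strings, read them one bit at a time with the geometric cooperation gadget of Section~\ref{sec:nearly-optimal-squares}, then write the output bits. Two points deserve correction or care.

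First, your tile count for the readers is wrong. You claim the bit-reader pool, ``indexed by bit position and partial-match state,'' has size $O(\log |T|)$. But after reading $i$ bits the partial-match state already ranges over $2^i$ values, so the pool has size $\sum_{i=0}^{\lceil\log G\rceil} O(2^i) = O(G) = O(|T|)$, exactly as the paper computes. This does not break the final $|S|=O(|T|)$ bound, but your justification for that bound is incorrect as written.

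Second, you propose to bit-read \emph{both} the south and the horizontal input glues. The paper does not do this: it bit-encodes only the strength-$1$ north/south glues, while east/west (and strength-$2$) glues are passed directly as ordinary glues on the first and last tiles of each macro-tile path. This matters for the count: if you bit-read both inputs, the reader tiles must carry the pair (full south glue, partial east glue), and the number of such pairs can exceed $O(|T|)$ unless you prune to only those prefixes consistent with some tile of $T$. Passing the horizontal glue directly, as the paper does, sidesteps this by fixing the east glue at the start of the macro-tile so that only the south glue needs to be accumulated bit by bit.
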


\begin{figure}[htp]
\begin{center}
    {\subfloat[{\small The set of input/output side combinations (grouped by input sides) for a zig-zag TAS.  Note that the right side of a seed row utilizes the the rightmost left-to-right type.}]
    {\label{fig:zig-zag-TM-input-output}\includegraphics[width=1.3in]{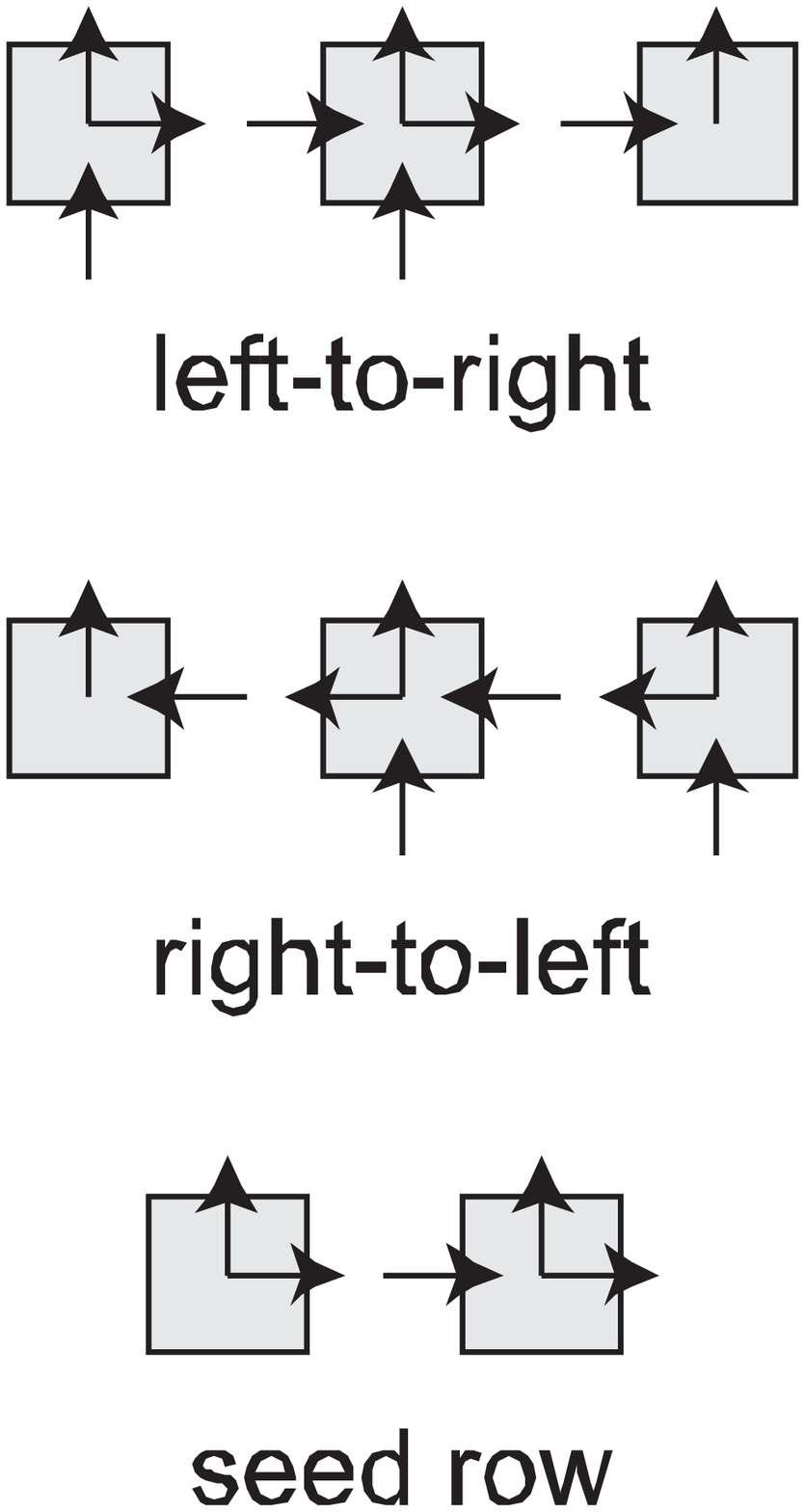}}}
    \quad
    {\subfloat[{\small Macro-tiles that simulate individual tiles from the zig-zag TAS.  The arrows show the direction of growth and the schematic tiles on the right show which directions are input and output sides for each macro-tile (with those in parenthesis represented by mirror images of the macro-tiles). The checkered tiles represent locations at which a negative glue is placed in order to tell the light(est) grey path to ``keep going'' to the right--similar to the tiles in the construction of Theorem~\ref{square_theorem} that crawl down the right side of the first binary counter. }]
    {\label{fig:zig-zag-macro-tiles}\includegraphics[width=3.0in]{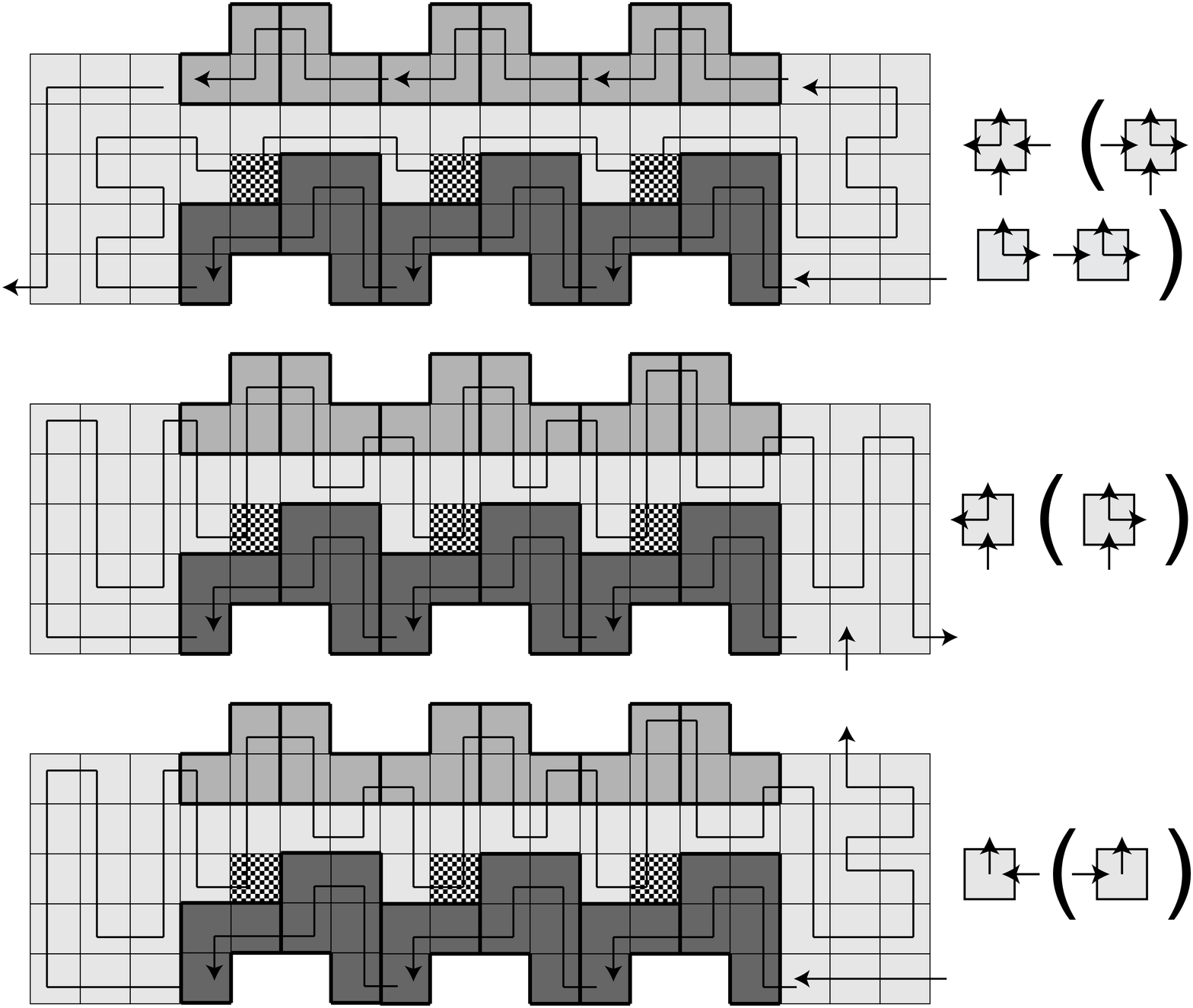}}}
    \quad
    {\subfloat[{\small Reading one bit of the glue type.}]
    {\label{fig:zig-zag-read-bit}\includegraphics[width=1.1in]{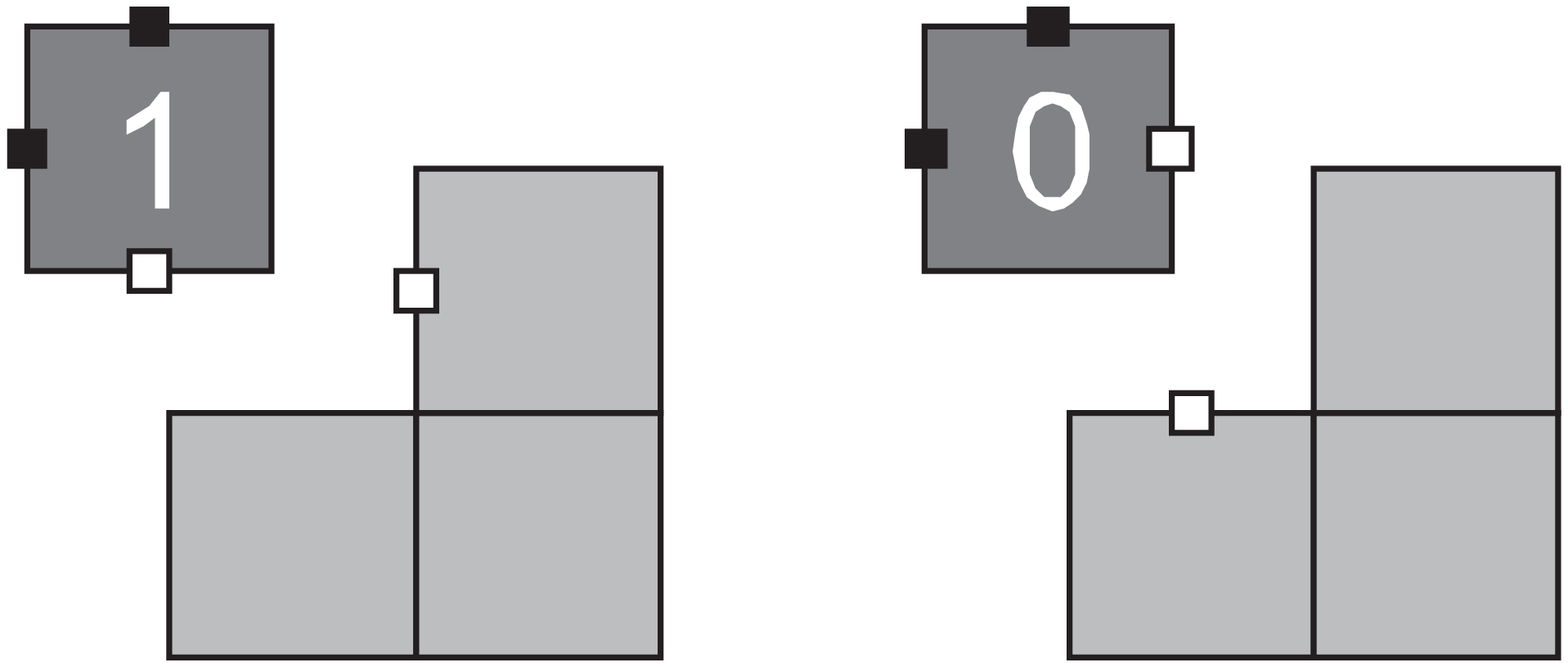}}}
    \vspace{-5pt}
    \caption{\small Details of the zig-zag simulation construction.}
    \label{fig:zig-zag-simulation}
\end{center}
\end{figure}

The remainder of this subsection is devoted to a brief, intuitive sketch of our construction for Theorem~\ref{thm:zigzagSimulation1NegGlueprgTAS}.

Intuitively, $\mathcal{S}$ simulates $\mathcal{T}$ by logically converting each tile type $t \in T$ into a group (path) of tile types in $S$ that self-assemble into a \emph{macro-tile}.  Let $G$ be the number of unique strength-$1$ north/south glues in $T$. Each macro-tile is a path of $2 \left\lceil \log G\right\rceil + 30$ tiles and forms in its entirety before allowing the next macro-tile to form, whence the scale factor in our construction is $O\left(\log G \right) = O(\log |T|)$. Moreover, we can use the fact that macro-tiles are all the same size in order to easily compute the indices $i_{-1} < i_0 < \cdots i_{k}$ in the definition of path simulate.

As shown in Figure~\ref{fig:zig-zag-TM-input-output}, each $t \in T$ has well-defined ``input'' and ``output'' sides (for convenience, and without loss of generality, we fix the seed row as growing from left to right), some of which may be the empty glue. The corresponding macro-tiles are depicted in Figure~\ref{fig:zig-zag-macro-tiles}. We encode each north/south glue in $T$ as a unique $(G + 1)$-bit binary string (we also encode the empty glue label, whence each glue is represented as a $(G + 1)$-bit binary string). We then encode each binary string (representing a glue) as a path of bumps and dents along the north and south side of the appropriate macro-tile(s). Into the bumps and dents, we carefully place the negative glue type to either represent a `0' or a `1' bit--similar to the construction for Theorem~\ref{square_theorem}. Note that we do not represent the ``east/west'' or strength-$2$ ``north/south'' glues of $T$ in this manner because in this case we encode these glue types in $T$ on the glues of the tiles which serve as the beginning and ends (inputs and outputs) of the paths forming the corresponding macro-tiles.

A macro-tile that represents a tile type $t \in T$ that binds via two ``input'' sides (e.g., south-east/south-west) self-assembles in two logical stages: reading the input glues and unpacking the output glues. In what follows, we will discuss the macro-tiles that represent tiles that have south/east input sides. The macro-tiles that emulate tiles with south/west input sides are constructed similarly.

\textbf{Reading the Input Glues.} In the first stage of the self-assembly of a ``south-east input'' macro-tile, an initial portion of its path crawls (either to the left or to the right) across the top of an existing macro tile. In doing so, the growing macro-tile path ``reads'' in, via a series of appropriate-placements of the negative glue in bumps and dents, a binary string, which represents a glue type in $T$. The method of ``reading'' a bit is depicted in Figure~\ref{fig:zig-zag-read-bit} and is similar to the technique used in Theorem~\ref{square_theorem} (see also Figure~\ref{fig:cooperation_gadget}).

\begin{figure}[htp]
    \begin{center}
    \includegraphics[width=5.0in]{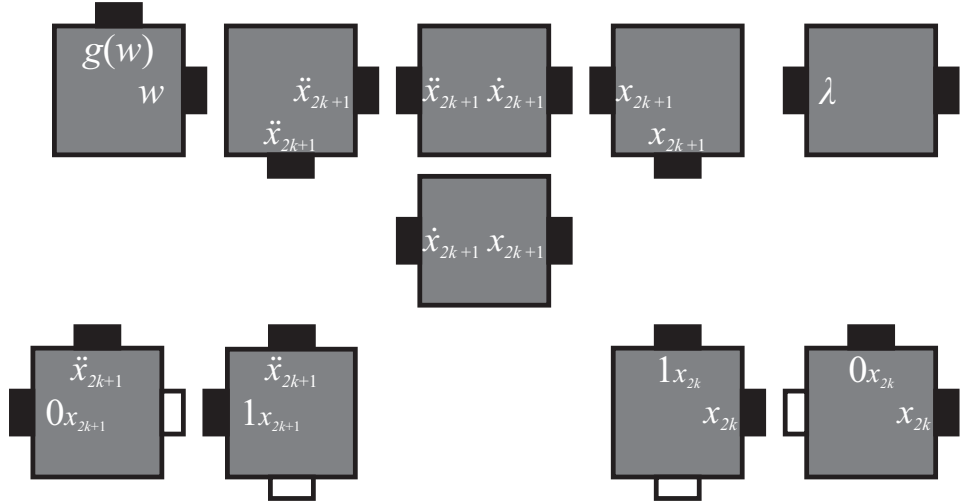} \caption{\label{fig:macro_tiles_read_binary} \small The tile types that read a binary string from the top of an existing macro-tile as they self-assemble from right to left. For each $i = 0, 1, \ldots, G + 1$, let $x_i \in \{0,1\}^i$. Note that the east input glue is implicitly encoded into the tile types shown here. The upper right tile initiates the process of reading the input binary string. The upper left tile completes the process by mapping a pair of south-east input glues to the corresponding north-west output glues. Tiles for south-west input macro tiles are designed similarly.}
    \end{center}
\end{figure}

For each $0 \leq i < G + 1$, we have a group of tile types that are responsible for collecting the $i^{\textmd{th}}$ bit of a binary string as they assemble a path to the left while ``remembering'' the previous $0 \leq j < i$ bits (see Figure~\ref{fig:zig-zag-reading-example}). Note that not every tile type in the group that reads the $i^{\textmd{th}}$ bit needs to remember \emph{all} $G + 1$ bits. In fact, it suffices for the group of tiles responsible for reading the $i^{\textmd{th}}$ bit to only remember $i$ bits. In order to do this, we use $O\left(2^0 + 2^1 + \cdots + 2^i\right) = O\left(2^{i + 1}\right)$ unique tile types, i.e., $O(1)$ tile types for each of the $2^j$ $j$-bit binary strings, whence we must have a total of $O\left(2^{\log (G + 1)}\right) = O(G) = O(|T|)$ unique tile types to read a glue from the top of an existing macro-tile (these tile types for a south-east input, north-west output macro-tile are shown in Figure~\ref{fig:macro_tiles_read_binary}). Once all $G + 1$ bits have been collected, we have a group of $O(G)$ unique tile types to convert the east input glue, along with the south input glue, into the appropriate output glue(s) for the macro-tile..

\begin{figure}[htp]
    \begin{center}
    \includegraphics[width=5.5in]{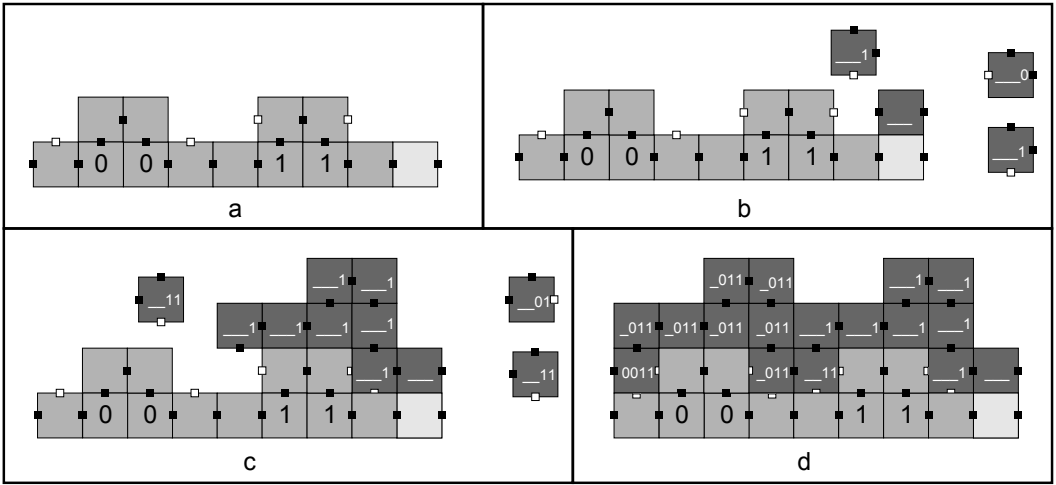} \caption{\label{fig:zig-zag-reading-example} \small An example depicting the north side of a macro-tile being ``read'' by the south side of another macro-tile. Here, the binary number being read is ``$0011$''.  The northern macro-tile grows from right to left.  Initially it has no information about the simulated glue to the south, and as it passes each position representing a bit, due to the configurations of the negative glue (pictured as white squares), it is able to place only one of two tiles, thus reading either a $0$ or $1$. See Figures~\ref{fig:macro_tiles_read_binary} and~\ref{fig:macro_tiles_write_binary} for more detail.}
    \end{center}
\end{figure}

\begin{figure}[htp]
    \begin{center}
    \includegraphics[width=4.5in]{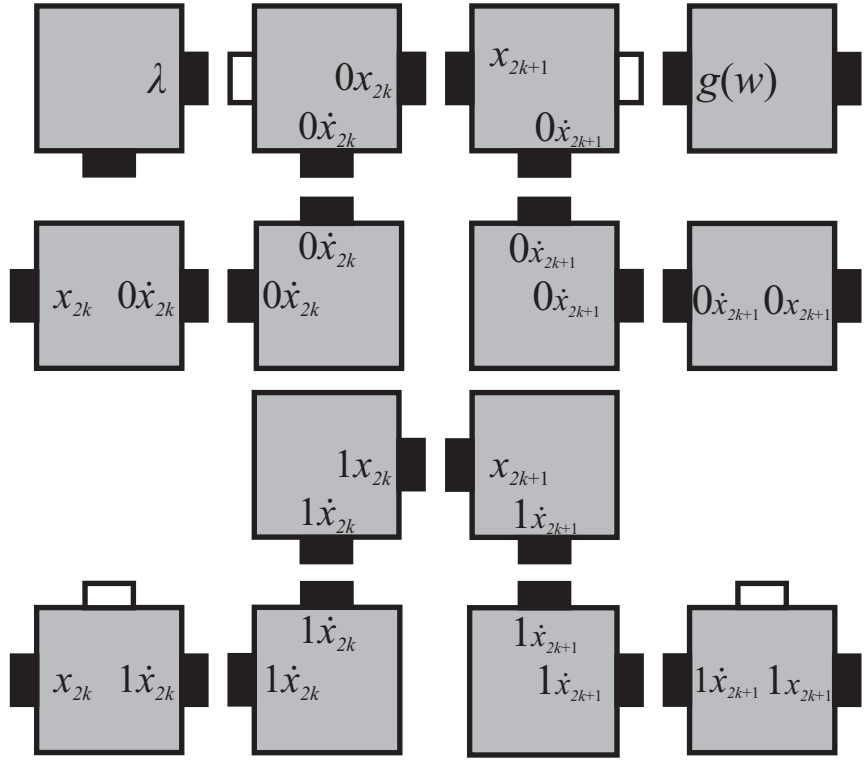} \caption{\label{fig:macro_tiles_write_binary} \small The tile types that unpack a glue type into binary string as they self-assemble from right to left. For each $i = 0, 1, \ldots, G + 1$, let $x_i \in \{0,1\}^i$. Note that the east input glue is implicitly encoded into the tile types shown here. The upper right tile initiates the process of unpacking the input binary string. The upper left tile marks the completion of this process. Tiles for south-west input macro tiles are designed similarly.}
    \end{center}
\end{figure}

\textbf{Unpacking the Output Glues.} After the output glue(s) of a macro-tile have been determined, the macro-tile path crawls back across itself and determines when to ``stop'' via the checkered tiles in Figure~\ref{fig:zig-zag-simulation}(b). Then the path crawls, once again, back across itself and ``unpacks'' the north output glue (it does not have to unpack the west output glue by the way we encode the east/west glue types in $T$ in macro-tiles). We accomplish this task in a manner that is similar to--but essentially the opposite of--reading in a $(G + 1)$-bit binary string. To do this, we use $O\left(G \right) = O(|T|)$ unique tile types (these tile types for a south-east input, north-west output macro tile are shown in Figure~\ref{fig:macro_tiles_write_binary}.

Finally, a macro-tile in $S$ that represents a tile type $t \in T$ that binds via a single, strength-$2$, input(output) side does not perform any input reading or output unpacking because we encode each strength-$2$ glue in $T$ as a unique strength-$1$ glue in $S$. Thus, when such a macro tile self-assembles, it does so in a single logical stage.



\begin{theorem}
\label{the_other_guys}
For every standard Turing machine $M$ and input $w$, the following hold.
\begin{enumerate}
    \item There exists an rgTAS that simulates $M$ on $w$
    \item There exists a prgTAS that simulates $M$ on $w$.
\end{enumerate}
\end{theorem}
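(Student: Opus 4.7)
The plan is to derive this theorem as a corollary of the two path-simulation results proved earlier in this section, combined with the fact (stated in the Preliminaries and due to \cite{CookFuSch11}) that compact zig-zag tile assembly systems are capable of simulating arbitrary Turing machines. No new construction is needed at the level of rgTAS/prgTAS; everything is assembled by composition.

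First, given an arbitrary standard Turing machine $M$ and input $w$, I would invoke the construction of \cite{CookFuSch11} to obtain a compact zig-zag TAS $\mathcal{T}_{M,w} = (T,\sigma,2)$ whose unique terminal assembly encodes the computation history of $M$ on $w$ (each horizontal row corresponds to one configuration of the tape, and state/head/symbol information is encoded locally in the tile types). The fact that \emph{compact} zig-zag systems, not merely general zig-zag systems, suffice for Turing simulation is precisely the point noted in Section~\ref{sec:prelim}, and is what allows us to feed $\mathcal{T}_{M,w}$ directly into our subsequent theorems.

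For part (1), I would apply Theorem~\ref{thm:zigzagSimulation1NegGluergTAS} to $\mathcal{T}_{M,w}$ to obtain an rgTAS $\mathcal{S}_1 = (S_1,\gamma_1,1)$ that path simulates $\mathcal{T}_{M,w}$ at scale factor $12$, with $|S_1| = O(|T|)$. For part (2), I would apply Theorem~\ref{thm:zigzagSimulation1NegGlueprgTAS} to the same $\mathcal{T}_{M,w}$ to obtain a prgTAS $\mathcal{S}_2 = (S_2,\gamma_2,1)$ that path simulates $\mathcal{T}_{M,w}$ at scale factor $O(\log|T|)$. Because path simulation is witnessed by a computable function $f$ from producible assemblies of $\mathcal{S}_i$ onto producible assemblies of $\mathcal{T}_{M,w}$, composing this function with the decoding map from terminal assemblies of $\mathcal{T}_{M,w}$ to halting/output data of $M$ gives a computable decoding of the terminal assembly of $\mathcal{S}_i$. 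In particular $M$ halts on $w$ if and only if $\mathcal{S}_i$'s terminal assembly contains the corresponding macro-tile block, and the output of $M$ can be read off the appropriate macro-tile region. This is exactly what it means for $\mathcal{S}_i$ to \emph{simulate} $M$ on $w$.

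The main obstacle, such as it is, is essentially a bookkeeping issue: pinning down a precise definition of ``TAS simulates Turing machine $M$ on input $w$'' and verifying that this property is preserved under path simulation. Since path simulation guarantees a computable macro-tile-to-tile correspondence and preserves the unique assembly sequence up to grouping, this preservation is immediate once the definition is fixed; no new geometric or glue-level construction is required. Thus the theorem follows directly from Theorems~\ref{thm:zigzagSimulation1NegGluergTAS} and~\ref{thm:zigzagSimulation1NegGlueprgTAS} together with the zig-zag Turing universality result of \cite{CookFuSch11}.
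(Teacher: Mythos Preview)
Your proposal is correct and follows essentially the same approach as the paper: invoke the compact zig-zag Turing-machine simulation of \cite{CookFuSch11}, then apply Theorems~\ref{thm:zigzagSimulation1NegGluergTAS} and~\ref{thm:zigzagSimulation1NegGlueprgTAS} to obtain the rgTAS and prgTAS, respectively. Your additional remarks about composing the decoding map through path simulation are a reasonable elaboration, but the paper's own proof is even terser and does not spell this out.
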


\begin{proof}
For every Turing machine $M$ and $w \in \Sigma^*$, there exists a compact zig-zag TAS $\mathcal{T}_{M(w)}$ that simulates $M$ on input $w$ \cite{CookFuSch11}. The basic idea is to design $\mathcal{T}_{M(w)}$ so that self-assembly proceeds in a ``zig-zag'' growth pattern. This means that self-assembly proceeds according to a unique assembly sequence, which builds horizontal rows of tiles (configurations of $M$) one at a time, alternating growth from left-to-right and right-to-left. Figure~\ref{fig:zig-zag-TM} shows an example of a zig-zag Turing machine construction.

By Theorem~\ref{thm:zigzagSimulation1NegGluergTAS}, we can simulate $\mathcal{T}_{M(w)}$ with an rgTAS, and by Theorem~\ref{thm:zigzagSimulation1NegGlueprgTAS}, we can simulate $\mathcal{T}_{M(w)}$ with a prgTAS. 
\end{proof}

\begin{figure}[htp]
    \begin{center}
    \vspace{-5pt}
    \includegraphics[width=\textwidth]{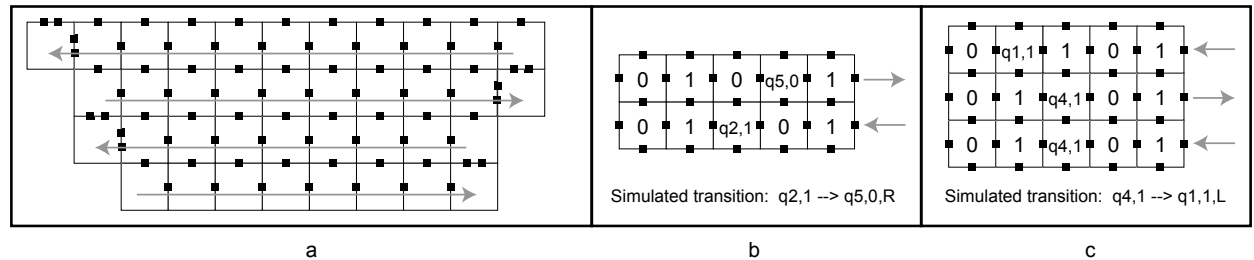} \caption{\label{fig:zig-zag-TM} \small Sketch of a zig-zag Turing machine. (a) Rows grow in alternating directions (grey arrows) and are extended in width by one tile per row.  Upward growth occurs only at the end of each row.  (b) Example of a TM transition which moves the head to the right occurring in a row growing left-to-right.  (c) Example of a TM transition which moves the head to the left.  Note that if this transition is encountered by a row which is growing left-to-right, the transition will be skipped in that row (an effective ``no op''), and instead simulated by the next row which grows right-to-left.}
    \end{center}
    \vspace{-30pt}
\end{figure} 

\subsubsection*{Acknowledgments}The authors would like to thank Dave Doty for insightful comments which inspired large improvements to the original results and led directly several of the current results.

\bibliographystyle{amsplain}
\bibliography{tam}

\end{document}